\newcommand{\R}{{\mathbb R}}
\setlist[description]{font=\normalfont\textbullet\space}
\normalfont\fontsize{12}{15}\bfseries}{\thesection}{1em}{}
\title{Prior Independent Equilibria and  Linear Multi-dimensional Bayesian Games}
\author[1]{Abbas Edalat}
\author[2]{Samira Hossein Ghorban}
\affil[1]{ Department of Computing, Imperial College London,  United Kingdom\\a.edalat@imperial.ac.uk }
\affil[2]{School of Computer Science, Institute for Research in Fundamental Sciences (IPM), Tehran, Iran\\ s.hosseinghorban@ipm.ir }
\date{}
\newtheorem{theorem}{Theorem}
\newtheorem{corollary}{Corollary}
\newtheorem{definition}{Definition}
\newtheorem{example}{Example}
\newtheorem{lemma}{Lemma}
\newtheorem{proposition}{Proposition}
\begin{document}
\maketitle
	\noindent
	\rule[0.5ex]{1\columnwidth}{1pt}

\begin{abstract}
We show that, in a Bayesian game, a Bayesian strategy map profile is a Bayesian Nash Equilibrium independent of any prior if and only if the Bayesian strategy map profile, evaluated at any type profile, is the Nash equilibrium of the so-called local deterministic game corresponding to that type profile. We call such a Bayesian game, desired in particular in mechanism design when the prior is unknown, type-regular. We then show that an $m$-dimensional $n$-agent Bayesian game whose utilities are linearly dependent on the types of the agents is equivalent, following a normalisation of the type space of each agent into the $(m-1)$-simplex, to a simultaneous competition in $mn$ so-called  basic $n$-agent games. If the game is own-type-linear, i.e., the utility of each agent only depends linearly on its own type, then the Bayesian game is equivalent to a simultaneous competition in $m$ basic $n$-agent games, called a multi-game.  We then prove that an own-type-linear Bayesian game is type-regular if it is type-regular on the vertices of the $(m-1)$-simplex, a result which provides a large class of type-regular Bayesian maps. 

The class of  $m$-dimensional own-type-linear Bayesian games can model, via their equivalence with multi-games, simultaneous decision-making in $m$ different environments. We show that a two dimensional own-type-linear Bayesian game can be used to give a new model of the Prisoner's Dilemma  in which the prosocial tendencies of the agents are considered as their types and the two agents play simultaneously in the PD as well as in a prosocial game. This leads to a type-regular Bayesian game which is proposed as a way of addressing the prosocial tendencies of the agents and the social payoff for cooperation. Similarly, we present a new two dimensional Bayesian model of the Trust game in which the type of the two agents reflect their prosocial tendency or trustfulness, which leads to more reasonable Nash equilibria. We finally consider an example of such multi-environment decision making in production  by several companies in multi-markets.
\end{abstract}

\rule[0.5ex]{1\columnwidth}{1pt}

\vspace{0.5cm}
\newpage
\section{Introduction}
Von Neumann and Morgentern originally modelled the behavior of rational agents in which agents make independent decisions in order to maximize their
utilities, payoffs or self-interests in a single environment or economy~\cite{von2007theory}. 
The notion of Nash equilibrium (NE) has become the key concept in game theory since Nash's celebrated proof of existence of a mixed NE for all finite games~\cite{nash1950equilibrium}.
A similar notion of Bayesian NE is also at the basis of games with incomplete information as shown by Harsanyi~\cite{harsanyi1968games}.

In a game with incomplete information, each agent may have some private information about its strategy set or its payoff which is unknown to the other agents. Harsanyi considered this private information as a set of types for each agent. It is assumed that agents have some belief or prior about other agents' private information, which can be captured by a probability distribution over the types of all agents. Since each agent's choice of strategy depends on its type, one can consider an expanded game in which a Bayesian strategy for an agent is given by a function that takes its types to its choice of actions in the underlying game. Knowing the joint probability distribution of the types, each agent can compute its expected payoff given its own type. This leads to a solution concept of equilibrium that is called pure or mixed Bayesian
NE~\cite[p. 215]{fudenberg1991game}; see Section~\ref{Prior Independent Bayesian Equilibria}.

The first part of this paper is focused on Bayesian games with a Bayesian Nash equilibrium that is independent of any prior. An important area of application of such games is in mechanism design. Auctions and mechanisms are incomplete information games
in which the outcome can be described by the standard equilibrium concept in which a prior assumption is necessary.
To learn the prior, a market analysis can be performed, for example, by hiring
a marketing firm to survey the market and determine distributional estimates
of agent preferences.
This process is quite reasonable in large markets.
In contrast, if there are only a few firms in the market for a product,  the sample size would hardly be enough for estimating the distribution for agents' values~\cite[Chapter 5]{hartlie2017}.
A prior-independent approach to Bayesian game theory means that although there exist prior
distributions from which the agents' values are drawn,
the mechanism designer has no knowledge of these priors
~\cite{devanur2011prior}.

A prior-independent mechanism would be parameterized neither by the distribution on
agent preference nor by the capacity that governs the agents' utility functions.
Based on their results in~\cite{fu2013prior}, the authors argue that it may be possible to develop a general theory for prior-independent mechanisms for risk-averse agents, although this theory would look different from the existing theory of algorithmic mechanism design.

Later in the paper, we examine multi-dimensional Bayesian games with linear utilities. Multi-dimensional Bayesian games, in which the type of each agent is a real vector, have been studied by Krishna and Perry in the context of multiple object auctions~\cite{krishna1998efficient}. 
This class includes all combinatorial auction problems~\cite{lucier2010price}. 
Common examples include  multi-item environments where
an agent has different values for each item~\cite{hartline2013bayesian}: for example, a home buyer may have distinct values for
different houses on the market, an Internet user may have distinct values for various qualities
of service, and an advertiser on an Internet search engine may value traffic for search phrase
“mortgage” higher than that for “loan”~\cite[Chapter 7]{hartlie2017}.
The challenge posed by multi-dimensional private information is that
multi-dimensional type spaces can be large, may be analytically
or computationally intractable~\cite{hartline2013bayesian}.

Athey's result that a monotone pure-strategy equilibrium exists whenever a Bayesian game satisfies the so-called single crossing condition~\cite{athey2001single} has been extended, from one-dimensional type and action spaces, to the setting in which type and action spaces are multidimensional and only partially ordered~\cite{mcadams2003isotone, van2007monotone, reny2011existence}.
In~\cite{rabinovich2013computing}, Bayesian games with multi-dimensional types  where
a utility of an agent depends only on the actions performed by others and not on their type and  each agent draws its type independently from a commonly known continuous distribution have been studied.

The assumption of linear utilities is at the basis of the classical Cournot competition of firms, which has been widely used to model the economy~\cite{daughety2005cournot}. It has been also used recently in linear exchange economies~\cite{bonnisseau2003existence}, for which uniqueness of the equilibrium under the linear assumption and the unique utility level at equilibrium were established in~\cite{bonnisseau2001continuity}. In addition, there is a well-known linear utility representation for a large class of preference relations, i.e., those that are translation invariant and semi-continuous at some point on any finite dimensional Euclidean space~\cite{candeal1995note, trockel1992alternative}.

In this paper, we ask under what conditions a Bayesian game has a prior independent Bayesian Nash equilibrium. We show that a Bayesian strategy map profile, i.e., an assignment of a mixed NE to every type of every agent, is a Bayesian Nash Equilibrium independent of any prior if and only if the Bayesian strategy map, evaluated at any type profile, is the NE of the local game corresponding to that type profile. We accordingly call a Bayesian game with a Bayesian Nash equilibrium independent of any prior a {\em type-regular} Bayesian game. 

We then examine the linear class of multi-dimensional $n$-agent Bayesian games, in which the type space of each agent is a non-zero real vector with, say, $m$ non-negative components, and the utilities of each agent depends linearly on the types of the agents. We show that, following a normalisation of the type space of each agent into the $(m-1)$-dimensional simplex, such a Bayesian game is equivalent to a simultaneous competition by the agents in $mn$ so-called basic $n$-agent games, i.e., one basic game for each type component of the $n$ agents. In case, the utility of each agent only depends linearly on its own type, the Bayesian game is equivalent to simultaneous competition in $m$ basic $n$-agent games, which we call a multi-game. 

 We then prove, using the equivalence with multi-games, that an own-type-linear Bayesian game is type-regular if it is type-regular on the vertices of the $(m-1)$-simplex. This provides a large class of type-regular Bayesian games and thus gives a scheme for designing games with a Bayesian Nash equilibrium independent of any prior. 

Multi-games, as an equivalent representation of own-type-linear multi-dimensional Bayesian games, have some similarities and yet some basic differences with polymatrix games~\cite{yanovskaya1968equilibrium}, one of several well-studied classes of compactly represented games, which also include graphical games~\cite{kearns2001graphical}, hypergraphical game~\cite{papadimitriou2008computing}, and
graphical multi-hypermatrix games~\cite{ortiz2017tractable}. It is useful to recall that a 2-agent Bayesian game with a finite number of types can be represented by a polymatrix game~\cite{howson1974bayesian}.
In a polymatrix game, every agent plays the same strategy in every 2-agent subgame, and its utility is the sum of its subgame utilities.
In a MG, however, the utility of each agent in any local game, i.e, for any type profile, is a weighted sum of its $n$-agent basic game utilities where the weights, considered as private information, are given by the components of the agent's type.

The equivalence with multi-games show us that $m$-dimensional own-type-linear Bayesian games can model simultaneous decision-making by $n$ agents in $m$ different environments, i.e., one environment per dimension. Two different kinds of examples for this form of simultaneous decision-making are considered in this paper: (i) In production  by several companies in multi-markets. (ii) In human decision making in which as well as material payoffs there are social and moral consequences or payoffs for the actions of the agents, as in the Prisoners' Dilemma and the Trust game.

In the economic literature, competition in multi-markets have been examined in the context of the Cournot model. In~\cite{bulow1985multimarket}, Bulow et al.~ provide a numerical example of Cournot markets in which two firms sell in one market and one of them is a monopolist in a second market.
 In more recent years, several authors
have examined a network approach to Cournot competition~\cite{bimpikis2014cournot, abolhassani2014network, abolhassani2014network}. In the multi-game approach, different markets are considered as independent of each other with different rates of return and the companies allocate different proportions of their investment, considered as their types, to these markets.

A classic benchmark for modeling human decision making when self-interests are at stake is provided by the Prisoner's Dilemma (PD)~\cite{ostrom2007biography,shubik1970game} in which defection by both agents represents the NE despite the fact that mutual cooperation produces greater reward. This outcome would be consistent with the basic tenet of game theory to maximise self-interests. However, when confronted with the choice to cooperate or defect, human beings not only consider their material score, but also the social and moral implications of their individual decisions, and the consequent social and moral payoffs. This view is supported in game theory by Gintis~\cite[Chapter 1]{gintis2009bounds} and in neuroscience by the finding that decision making has a significant and substantial emotional component~\cite{bechara2000emotion,loewenstein2003role}. In fact, in recent years, there have been experiments on the PD with real people that corroborate this argument empirically.  Khadjavi and Lange present an experiment to compare female prison inmates and students in a simultaneous and an iterated PD~\cite{Khadjavi2013163}. In the simultaneous PD, the cooperation rate among inmates exceeded the rate of cooperating students. The authors have concluded that a similar and significant fraction of inmates and students hold social preferences. Brosig provides findings from a face-to-face experiment that used the PD to analyse whether individuals who possess a willingness to cooperate can credibly signal it and whether it is recognisable by the partner~\cite{Brosig2002275}. Results revealed that both capabilities, signaling and recognising, depend upon the individual's propensity to cooperate. There is a vast literature on non-cooperative games to account for altruistic behaviour; see the literature review in~\cite{chen2011robust} which addresses the issue by considering an altruistic extension of the pay-offs of the agents who are then provided with altruistic coefficients that can be considered as their types. This model however cannot account for the different social values of the agents' different choices of actions.

We show that a two dimensional own-type-linear Bayesian game, equivalent to a double game, i.e., with two basic games, can be used to give a new Bayesian model of the Prisoner's Dilemma  in which the prosocial preferences of the agents are considered as their types. In this model, one dimension or game is represented by the PD and the other dimension is given by a social game which encourages cooperation and allows the social or moral payoffs of the agents' actions to be also taken into account in the decision making.

In more recent years, the so-called Trust game with two agents and an experimenter has been proposed to measure trust in human economic behaviour~\cite{berg1995trust}. Initially the two agents are given an equal amount of money. Then in stage one, the first agent is asked to send some of her money to the experimenter who triples it and sends the tripled amount to the second agent. In stage two, the second agents is asked to send some of the money she has received by the experimenter to the first agent. The NE in the Trust game stipulates that the first agent sends no money to the experimenter and the second agent also sends no money back to the first agent; see Subsection~\ref{trust}. 

However, in practice, as in the PD, human agents deviate from the NE of the Trust game as reported in a meta-analysis of 162 replications of the game involving more than 23,000 participants~\cite{JM11}, in which on average the sender does send some money to the receiver and the receiver does return some of the money received to the sender. An explanation for this deviation has been proposed by evolutionary psychology: “Evolutionary models predict the emergence of trust because it maximises genetic fitness~\cite{berg1995trust}. We use a two dimensional staged Bayesian game or double game in which one game is the Trust game and the other game is a social or conscience game and the prosocial tendencies of the agents are represented by their types. Depending on the receiver's belief, we obtain different subgame perfect Nash equilibria for the double game that include strategy profiles in which both the sender and the receiver forward some money to each other.

\section{Prior independent Bayesian equilibria}\label{Prior Independent Bayesian Equilibria}
We first recall the definition of a general class of Bayesian games as in~\cite[p. 215]{fudenberg1991game}. A Bayesian game $G$ is a game in strategic form with incomplete information which has the following structure: 
	$G=\big<I, (A_i, \Theta_i, u_i )_{i \in I}, p(\cdot)\big>$
	where $I=\{1, \ldots , n \}$ is the set of agents, $A_i$ is agent $i$'s  action set, $\Theta_i$ is agent $i$'s type space, and
	  $u_i: \prod_{i \in I}A_i \times \prod_{i \in I}\Theta_i  \to \mathbb{R}$
	  is  agent $i$'s payoff. 
		The agents' type profile $(\theta_1, \ldots, \theta_n)^t\in \prod_{i\in I} \Theta_i$ is drawn from a given joint probability distribution $p(\theta_1, \ldots, \theta_n)$.
		For any $\theta_i \in \Theta_i$, the function $p(\cdot|\theta_i)$ specifies a
		conditional probability distribution  over $\Theta_{-i}$ representing what agent $i$ 
		believes about the types of the other agents if its own type were $\theta_i$.

The type space of the game is defined as $\Theta:=\prod_{i \in I} \Theta_i$. The pure strategy map space  for agent $i\in I$ is the set
$S^{\Theta_i}_{i}=\{s_i(\cdot):  \Theta_i \to A_{i}\}$ so that $\prod_{i \in I} S_i^{\Theta_i}$ represents the space of all strategy map profiles.
For a strategy map profile $(s_i(\cdot), s_{-i}(\cdot)) \in S_i^{\Theta_i}\times S_{-i}^{\Theta_{-i}}$, the expected utility  of agent $i\in I$ is 
$$
u_i(s_i(\cdot), s_{-i}(\cdot))=	\sum_{\theta_{i} \in \Theta_i}\sum_{\theta_{-i} \in \Theta_{-i}}{p_i(\theta_{-i}|\theta_i) u_{i}\big(s_i(\theta_{i}), s_{-i}(\theta_{-i}),\theta_i, \theta_{-i}\big) }.
$$
Recall that a strategy map profile $(s_1(\cdot), \ldots, s_n(\cdot))$ is a pure Bayesian Nash equilibrium (BNE) if for each agent $i\in I$ and $s'_i(\cdot)\in S_i^{\Theta_i}$,
we have 
$u_i\big(s_i(\cdot), s_{-i}(\cdot) \big) \geq u_i\big(s'_i(\cdot), s_{-i}(\cdot)\big)$ \cite [p. 215]{fudenberg1991game}.
For discrete type spaces, this is equivalent to
$$s_i(\theta_i) \in \arg \max_{a_i \in A_i} \sum_{\theta_{-i}\in \Theta_i}{p_i(\theta_{-i}|\theta_i) u_{i}\big(a_i, s_{-i}(\theta_{-i}),\theta_i, \theta_{-i}\big) },$$
for each $i \in I$ and $\theta_i \in \Theta_i$.
Let $\Delta(A_{i})=\big\{\sigma_i(\cdot):A_{i} \to [0,1]| \sum_{a_i \in A_{i}} \sigma_i (a_i)=1\big\}$ be the set of mixed actions for agent $i\in I$.
By considering the normal form~\cite[p. 3]{fudenberg1991game} of $G$,  the mixed map strategy space for  $G$ is $\big\{\sigma_i(\cdot):\Theta_i \to \Delta(A_i)\big\}$. The notion of mixed Bayesian mixed NE is defined similar to pure BNE.  From now on, we assume a BNE is a mixed Bayesian mixed NE which may be pure.
\begin{definition}
	The restriction of a Bayesian game $G$ to a given type profile  $(\theta_1, \ldots, \theta_n)^t \in \Theta$ is denoted by $G_{(\theta_1, \ldots, \theta_n)}$
	and is called the \emph{local game} for $G$ at $(\theta_1, \ldots, \theta_n)$.	
\end{definition}
In general, any BNE in games with incomplete information
requires the prior distribution to be common knowledge. In many cases, however, the prior distribution may not be known, a situation that for example can occur in mechanism design~\cite{hartlie2017, devanur2011prior, fu2013prior} . In these cases, it is therefore desirable to relax this assumption.
We now seek necessary and sufficient conditions for a Bayesian game to have a
BNE that is independent of any prior.
\begin{theorem}\label{motivation-Regularity}
	Given a Bayesian game $G$, 	
	the strategy map profile $(\sigma_1(\cdot),\ldots, \sigma_n(\cdot))$ is a BNE for all priors if and only if
	the strategy profile  $\big(\sigma_1(\theta_1),\ldots, \sigma_n(\theta_n) \big)$ is a NE for the local game  $G_{(\theta_1,\ldots, \theta_n)}$ for all
	$(\theta_1, \ldots,\theta_n)^t\in \Theta$.
\end{theorem}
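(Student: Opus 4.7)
The plan is to prove the two directions separately. The sufficiency direction ($\Leftarrow$) rests on the observation that a pointwise inequality survives any convex combination, so local NE inequalities can simply be averaged against the prior to recover the BNE inequality. The necessity direction ($\Rightarrow$) proceeds by specialising the universal quantifier ``for all priors'' to point-mass priors, one for each type profile, which reduces the BNE condition at that prior to the local NE condition at the chosen profile.

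For the sufficiency direction, assume $(\sigma_1(\theta_1),\ldots,\sigma_n(\theta_n))$ is an NE of $G_{(\theta_1,\ldots,\theta_n)}$ for every $(\theta_1,\ldots,\theta_n)^t\in\Theta$. Fix an arbitrary prior $p$, an agent $i\in I$, and a deviation $\sigma'_i(\cdot)\in\{\sigma_i(\cdot):\Theta_i\to\Delta(A_i)\}$. For each fixed pair $(\theta_i,\theta_{-i})$, the local NE condition at $(\theta_i,\theta_{-i})$ gives
\[
u_i\bigl(\sigma_i(\theta_i),\sigma_{-i}(\theta_{-i}),\theta_i,\theta_{-i}\bigr)\;\geq\;u_i\bigl(\sigma'_i(\theta_i),\sigma_{-i}(\theta_{-i}),\theta_i,\theta_{-i}\bigr).
\]
Multiplying by the non-negative weights $p(\theta_i)\,p(\theta_{-i}\mid\theta_i)$ and summing over $(\theta_i,\theta_{-i})$ preserves the inequality and yields exactly the BNE condition $u_i(\sigma_i(\cdot),\sigma_{-i}(\cdot))\geq u_i(\sigma'_i(\cdot),\sigma_{-i}(\cdot))$ under the prior $p$.

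For the necessity direction, assume $(\sigma_1(\cdot),\ldots,\sigma_n(\cdot))$ is a BNE for every prior, and fix any type profile $(\theta_1^\ast,\ldots,\theta_n^\ast)^t\in\Theta$. Let $p$ be the prior concentrated on this profile, so that $p(\theta_i^\ast)=1$ and $p(\theta_{-i}^\ast\mid\theta_i^\ast)=1$ for every $i$. Under $p$ every sum in the expected-utility expression collapses to a single term, and the expected utility of agent $i$ under any map $\sigma'_i(\cdot)$ reduces to $u_i(\sigma'_i(\theta_i^\ast),\sigma_{-i}(\theta_{-i}^\ast),\theta_i^\ast,\theta_{-i}^\ast)$. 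Applying the BNE hypothesis to deviations that agree with $\sigma_i(\cdot)$ off $\theta_i^\ast$ and take an arbitrary value in $\Delta(A_i)$ at $\theta_i^\ast$ then says precisely that $\sigma_i(\theta_i^\ast)$ is a best response to $\sigma_{-i}(\theta_{-i}^\ast)$ in the local game $G_{(\theta_1^\ast,\ldots,\theta_n^\ast)}$. Ranging $i$ over $I$ and the profile over $\Theta$ completes the proof.

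I do not expect a serious obstacle: the argument is essentially an unfolding of definitions. The only subtle point is verifying that point-mass priors are legitimate members of the quantified class; if the problem were later restricted to strictly positive priors, one would instead take a sequence of priors approaching the point mass and invoke continuity of the expected utility in the prior to conclude the same pointwise best-response property.
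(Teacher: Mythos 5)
Your proposal is correct and follows essentially the same strategy as the paper's proof: for necessity it specialises the quantifier over priors to a degenerate (point-mass) prior to recover the local NE condition, and for sufficiency it multiplies the local NE inequalities by the prior weights and sums. The only cosmetic difference is that you use a joint point mass and the ex-ante expected utility while the paper fixes $\theta_i$ and takes a degenerate conditional distribution over $\Theta_{-i}$; both reduce to the same calculation.
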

\begin{proof}
	We present the proof for the case when all agents have finite type spaces.
	The case of infinite type spaces,  which uses integrals instead of sums to evaluate the payoffs, is entirely similar. 
	First, assume $(\sigma_1(\cdot),\ldots, \sigma_n(\cdot))$ is a BNE for all prior $p$. 
	Hence, for each $i\in I$ and for  any given $\theta_i \in \Theta_i$, we have
	\small
	\begin{align}\label{BNE-inddepensent Prior}
	\sum_{\theta'_{-i} \in \Theta_{-i}} p(\theta'_{-i} | \theta_{i}) u_i(\sigma_i(\theta_i), \sigma_{-i}(\theta'_{-i}),\theta_i, \theta'_{-i})	 \geq 
	\sum_{\theta'_{-i} \in \Theta_{-i}} p(\theta'_{-i} | \theta_{i}) u_i(a_i, \sigma_{-i}(\theta'_{-i}),\theta_i, \theta'_{-i})
	\end{align}
	\normalsize
	for each $a_i \in A_i$ and all priors $p$. 
	For any  given $\theta_{-i} \in \Theta_{-i}$, define the conditional probability distribution: $p(\theta'_{-i}|\theta_i)=1$ if $\theta'_{-i}=\theta_{-i}$ and $0$ otherwise.
	Using this prior $p$ in~(\ref{BNE-inddepensent Prior}), we deduce
	$
	u_i(\sigma_i(\theta_i), \sigma_{-i}(\theta_{-i}),\theta_i, \theta_{-i})
	\geq 
	u_i(a_i, \sigma_{-i}(\theta_{-i}),\theta_i, \theta_{-i})
	$
	for each $a_i \in A_i$. Thus, $(\sigma_i(\theta_i), \sigma_{-i}(\theta_{-i}))$ is a NE for the local game $G_{(\theta_i, \theta_{-i})}$.
	
	Now, suppose the strategy profile  $\big(\sigma_1(\theta_1), \ldots, \sigma_n(\theta_n) \big)$ is a NE for the local game  $G_{(\theta_1, \ldots, \theta_n)}$ for 
	$(\theta_1, \ldots, \theta_n)^t\in  \Theta$. Thus, for each agent $i\in I$:
	\begin{align}\label{NE to BNE}
	u_i(\sigma_i(\theta_i), \sigma_{-i}(\theta_{-i}),\theta_i, \theta_{-i})
	\geq 
	u_i(a_i, \sigma_{-i}(\theta_{-i}),\theta_i, \theta_{-i})
	\end{align}
	for each $a_i \in A_i$.
	Let $p$ be any joint probability distribution on $\Theta$.
	From Inequality~(\ref{NE to BNE}), we obtain:
	$
	p(\theta_{-i} |\theta_i)u_i(\sigma_i(\theta_i), \sigma_{-i}(\theta_{-i}),\theta_i, \theta_{-i})
	\geq 
	p(\theta_{-i} |\theta_i) u_i(a_i, \sigma_{-i}(\theta_{-i}),\theta_i, \theta_{-i})
	$
	for each $a_i \in A_i$.
	Therefore, summing over $\theta_{-i} \in \Theta_{-i}$, we conclude:	
	\begin{align*}
	\sum_{\theta_{-i} \in \Theta_{-i}} p(\theta_{-i} | \theta_{i}) u_i(\sigma_i(\theta_i), \sigma_{-i}(\theta_{-i}),\theta_i, \theta_{-i})	 \geq 
	\sum_{\theta_{-i} \in \Theta_{-i}} p(\theta_{-i} | \theta_{i}) u_i(a_i, \sigma_{-i}(\theta_{-i}),\theta_i, \theta_{-i})
	\end{align*}
	for each $a_i \in A_i$ which shows that
	$(\sigma_i(\cdot),\sigma_{-i}(\cdot))$ is a BNE for the prior $p$.
\end{proof}
Theorem~\ref{motivation-Regularity} motivates the following definition. Let the projection map $\pi_i: \Theta  \to \Theta_i$, for each $i \in I$, be given by $\pi_i(\theta_1, \ldots, \theta_n)=\theta_i$.
\begin{definition}\label{reg-def}
	A  Bayesian game $G$ is \emph{type-regular} on $\Theta'\subseteq \Theta$ if for each agent $i \in I$ there exists a function $\sigma^*_i(\cdot): \pi_i(\Theta') \to \Delta(A_i)$  such that
	the strategy profile  $\big(\sigma^*_1(\theta_1), \ldots, \sigma^*_n(\theta_n) \big)$ is a NE for the local game  $G_{(\theta_1, \ldots, \theta_n)}$ whenever $(\theta_1, \ldots, \theta_n)^t\in \Theta'$.
	If $G$ is type-regular on $\Theta$ then $G$ is simply called \emph{type-regular} and the associated strategy map profile and BNE are also called {\em type-regular}.
\end{definition}

Intuitively,  a Bayesian game $G$ is type-regular on
$\Theta'$ if for each agent and a given type component for it, selected from the set $\Theta'$,
the agent can select an action, dependent only on the given type component, such that for each type profile in $\Theta'$ the resulting action profile is a NE for the local game specified by that type profile.
Note from the definition that if $G$ is type-regular on $\Theta'\subseteq \Theta$, then it is type-regular on any subset of $\Theta'$. Theorem~\ref{motivation-Regularity} can be thus reformulated: a Bayesian game has a prior independent BNE  iff it is type-regular.

\section{Linear multi-dimensional Bayesian games}\label{vector Typed Bayesian Games}

We consider the standard Cartesian coordinate system in $\R^m$ for a given integer $m>1$ with the standard basis vectors $v_j=(v_{j1},\ldots,v_{jm})^t\in \R^m$, where $v_{jr}=1$ for $j=r$ and $0$ otherwise. Let $V=\{v_j:1\leq j\leq m\}$ and $\R_+$ denote the set of nonnegative real numbers. 
\begin{definition}
	A Bayesian game $G$ is {\em multidimensional} if the type of each agent is a   vector in ${\mathbb R}_+^m \setminus\{0\}$. A multidimensional Bayesian game is {\em type-linear } if the utility of each agent depends linearly on the types of all agents.  A linear Bayesian game is {\em own-type-linear}
	if the utility of each agent only depends on its own type. 
\end{definition}

We will now show that an $m$-dimensional $n$-agent Bayesian game is equivalent to a simultaneous competition by the $n$ agents in $m$ basic $n$-agent games.
Let $\Sigma^{m-1}=\{x\in \R_+^m:\sum_{i=1}^mx_i=1\}$ be the $(m-1)$-dimensional simplex in $\R^m$.  The following notion of a multi-game is similar to that introduced in~\cite{edalat2012multi}. 
\begin{definition}\label{Def:MG}
	A {\em  multi-game} 
		$G=\big<I,J, \{G_{j}\}_{j \in J},  \{A_{i}\}_{i\in I}, \{\Theta_i\}_{i \in I}, \{u_{ij}\}_{i\in I,  j\in J}, p(\cdot)\big>$ is an  own-type-linear Bayesian game  given by		
		$\big<I, (A_i, \Theta_i, u_i )_{i \in I}, p(\cdot)\big>$ with the following conditions:
		\begin{enumerate}
			\item[(1)] Agent $i$'s type space $\Theta_i\subseteq \Sigma^{m-1}$ for $i\in I=\{1,\ldots,n\}$.
			\item[(2)] There is a set of $n$-agent basic games $G_{j}$ where $j \in J=\{1,\ldots, m\}$ with action space $A_{i}$ and payoff function $u_{ij}$ for agent $i \in I$.
			\item [(3)] Agent $i$'s payoff for the strategy profile $(s_i, s_{-i})$ and type profile $(\theta_i, \theta_{-i})$ is given by
			\begin{equation*}
			u_{i}(s_i, s_{-i}, \theta_i, \theta_{-i})= \sum_{j\in J} u_{ij} (s_i,s_{-i})\theta_{ij}.	
			\end{equation*}			
		\end{enumerate}
\end{definition}
Multi-games (MG) can be seen to have some similarities and yet some basic differences with polymatrix games as pointed out in the Introduction. We now generalise the notion of a multi-game so that the utility of each agent depends on the types of all agents. A  {\em  generalized multi-game} 
		$$G=\Big<I,J, \{G_{kj}\}_{k\in I,j \in J},  \{A_{i}\}_{i\in I}, \{\Theta_i\}_{i \in I}, \{u_{ikj}\}_{i,k\in I,  j\in J}, p(\cdot)\Big>$$
		is a type-linear Bayesian game given by
		$\big<I, (A_i, \Theta_i, u_i )_{i \in I}, p(\cdot)\big>$ 
		which satisfies  items (1) and (2) of Definition~\ref{Def:MG} with item (3)  replaced by the following item:\\[.2ex] 
		(3') Agent $i$'s payoff for the strategy profile $(s_i, s_{-i})$ and type profile $(\theta_i, \theta_{-i})$ is given by
		$$
		u_{i}(s_i, s_{-i}, \theta_i, \theta_{-i})= \sum_{k\in I,j\in J} u_{ikj} (s_i,s_{-i})\theta_{kj}.	
		$$
We can now show the following result.
\begin{theorem}\label{General MG}
	Suppose $G$ is a type-linear $n$-agent Bayesian game such that for each agent $i\in I$ the type space $\Theta_i \subseteq \mathbb{R_+}^{m}\setminus\{0\}$, then $G$ is equivalent with a generalized MG. 
\end{theorem}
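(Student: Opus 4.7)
My plan is to exploit the type-linearity of the utility functions to exhibit the basic-game coefficients directly, then to normalise the type space into the simplex in order to produce an object fitting Definition of a generalized multi-game. The argument has three essentially independent ingredients: a linear-algebra extraction, a type rescaling, and a verification that the resulting object matches item (3').

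First, fix a strategy profile $s=(s_1,\ldots,s_n)\in\prod_{i\in I}A_i$ and observe that, by type-linearity, the map $\theta\mapsto u_i(s,\theta)$ is linear on $\prod_{k\in I}\R^m\cong\R^{mn}$. Hence there exist unique real coefficients $u_{ikj}(s)$, depending only on $s$, such that
\[
u_i(s,\theta)\;=\;\sum_{k\in I}\sum_{j\in J}u_{ikj}(s)\,\theta_{kj}.
\]
Reading each $u_{ikj}$ as a real-valued function on $\prod_{l\in I}A_l$ yields, for every $(k,j)\in I\times J$, an $n$-agent game $G_{kj}$ in which agent $l$'s payoff is $u_{lkj}$. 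These are the candidate basic games.

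Second, I would normalise the type spaces. Because $\theta_i\in\R_+^m\setminus\{0\}$, the scalar $\lambda_i(\theta_i):=\sum_{j\in J}\theta_{ij}$ is strictly positive, so $\hat\theta_i:=\theta_i/\lambda_i(\theta_i)$ lies in $\Sigma^{m-1}$. Setting $\hat\Theta_i=\{\hat\theta_i:\theta_i\in\Theta_i\}\subseteq\Sigma^{m-1}$, the map $\theta_i\mapsto\hat\theta_i$ induces a pushforward prior $\hat p$ on $\hat\Theta:=\prod_{i\in I}\hat\Theta_i$. Assembling everything gives
\[
G^{*}=\big\langle I,\,J,\,\{G_{kj}\}_{k\in I,j\in J},\,\{A_i\}_{i\in I},\,\{\hat\Theta_i\}_{i\in I},\,\{u_{ikj}\}_{i,k\in I,j\in J},\,\hat p(\cdot)\big\rangle,
\]
which satisfies items (1)--(2) of Definition of a multi-game and item (3') of the generalized multi-game by construction.

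The only remaining piece is to justify that $G^{*}$ and $G$ are equivalent. The idea is that each type profile $\theta$ in $G$ decomposes as the pair $(\hat\theta,\lambda)$ with $\lambda\in\R^n_{>0}$, and agent $i$'s payoff at $\theta$ in the local game $G_\theta$ equals $\sum_{k,j}u_{ikj}(s)\lambda_k\hat\theta_{kj}$, i.e.\ the payoff in $G^{*}$ at $\hat\theta$ with each $\hat\theta_{kj}$ reweighted by the positive factor $\lambda_k$. I foresee that verifying this equivalence carefully is the main obstacle: in the own-type-linear setting, only $\lambda_i$ appears in agent $i$'s payoff and positive scaling of an agent's own utility preserves its best-response correspondence, which is clean; but here factors $\lambda_k$ for $k\neq i$ also appear, so one must track them through the Bayesian expected-utility expression and check that the correspondence between BNE of $G$ and $G^{*}$ respects the pushforward $p\mapsto\hat p$. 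Once this bookkeeping is done, $G$ and $G^{*}$ coincide as Bayesian games up to the reparametrisation $\theta_i\leftrightarrow(\hat\theta_i,\lambda_i)$, completing the argument.
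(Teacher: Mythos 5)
Your construction is the paper's construction: the coefficient extraction (for each fixed strategy profile $s$, write the linear form $\theta\mapsto u_i(s,\theta)$ as $\sum_{k\in I}(L_{ik}(s))^t\theta_k$ and set $u_{ikj}(s)=(L_{ik}(s))_j$), the basic games $G_{kj}$, and the one-line verification of item (3') are all exactly what appears in the paper. Where you differ is in how you treat the normalisation: the paper divides each $\theta_i$ by $\lambda_i=\sum_j\theta_{ij}$ at the outset, declares ``without loss of generality $\Theta_i\subseteq\Sigma^{m-1}$'', and then only ever checks the utility identity for already-normalised types. You instead keep the factors $\lambda_k$ explicit and correctly observe that for $k\neq i$ they do not drop out of agent $i$'s payoff. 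That observation is right, and it is the one point the paper passes over in silence.

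The problem is that you then defer precisely this step (``once this bookkeeping is done\dots''), and the bookkeeping cannot be done in the strong sense you describe: the reparametrisation $\theta_k\leftrightarrow(\hat\theta_k,\lambda_k)$ does \emph{not} preserve best responses in the general type-linear case. Take $n=2$, $m=2$, $\theta_1=(2,0)$, $\theta_2=(\tfrac12,0)$, so $\hat\theta_1=\hat\theta_2=(1,0)$; fix agent 2's action and let agent 1 choose between $a$ with $L_{11}(a)=(1,0)$, $L_{12}(a)=(0,0)$ and $b$ with $L_{11}(b)=(0,0)$, $L_{12}(b)=(3,0)$. Then $u_1(a,\theta)=2>1.5=u_1(b,\theta)$, while in the normalised game $\hat u_1(a,\hat\theta)=1<3=\hat u_1(b,\hat\theta)$: agent 1's argmax flips, so the local games (hence type-regularity and the BNE correspondence) are not preserved. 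The only clean reading of the theorem is the paper's implicit one: the normalisation is a modelling convention built into the statement (one takes $\Theta_i\subseteq\Sigma^{m-1}$ as given, or defines ``equivalent'' to mean ``after normalising the types''), after which your first step plus the verification of (3') already completes the proof with nothing left to check. Your remark that positive rescaling is harmless is correct only in the own-type-linear case, where the single factor $\lambda_i$ is a constant given $\theta_i$ and factors out of the conditional expectation; in the general case you should either adopt the paper's convention explicitly or note that the stronger equivalence fails.
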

\begin{proof}
	Let 	$G=\big<I, (A_i, \Theta_i, u_i )_{i \in I}, p(\cdot)\big>$. Since $\theta_i\neq 0$ for $\theta_i\in \Theta_i$ for each agent $i\in I$, we can divide each vector $\theta_i\in \Theta_i$ by $\sum_{j\in J}\theta_{ij}>0$ and assume that $\Theta_i\subseteq \Sigma^{m-1}$ for $i\in I$. 
	Since $G$ is type-linear, for $i\in I$ and $(s_i,s_{-i})\in A_i\times A_{-i}$, there exists $L_{ik}(s_i,s_{-i}) \in \mathbb{R}^{m}$, for $k\in I$, such that
	$
	u_i(s_i,s_{-i},\theta_1,\ldots, \theta_n)=\sum_{k\in I}(L_{ik}(s_i,s_{-i}))^t\theta_k.
	$
	Consider the generalised MG given by \\[.2ex]
	\begin{center}$\hat{G}=\Big<I, J, \{G_{kj}\}_{k\in I,j \in J}, \{\Theta_i\}_{i \in I}, \{A_{i}\}_{i\in I},  \{\hat{u}_{ikj}\}_{i,k\in I, j\in J}, p(\cdot)\Big>$\\[.2ex]\end{center}
	 such that agent $i$'s utility function for the
	basic game $G_{kj}$ is given by  
	$\hat{u}_{ikj}(s_i,s_{-i})=(L_{ik}(s_i,s_{-i}))_j$ for $i\in I$ and $j\in J$.
	 Agent $i$'s utility in $\hat{G}$ is now seen to be that in $G$ as follows:
	$$\hat{u}_i(s_i,s_{-i}, \theta_i, \theta_{-i})=\sum_{k\in I}\sum_{j\in J} \hat{u}_{ikj}(s_i,s_{-i})\theta_{k{j}}=\sum_{k\in I}\sum_{j\in J}(L_{ik}(s_i,s_{-i}))_j\theta_{k{j}} =u_i(s_i,s_{-i},\theta_1,\ldots, \theta_n) .$$
\end{proof}

\begin{corollary}\label{MG-equivalent} 
	Suppose $G$ is an own-type-linear $n$-agent Bayesian game such that the type space $\Theta_i \subseteq \mathbb{R_+}^{m}\setminus\{0\}$ for each $i\in I$, then $G$ is equivalent with a  MG. 
\end{corollary}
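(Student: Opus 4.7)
The plan is to deduce the corollary from Theorem~\ref{General MG} by specialising the construction to the own-type-linear setting and observing that most of the coefficient terms must vanish. Because an own-type-linear Bayesian game is in particular type-linear, Theorem~\ref{General MG} immediately yields an equivalent generalized multi-game
$\hat{G}=\big\langle I,J,\{G_{kj}\},\{\Theta_i\},\{A_i\},\{\hat u_{ikj}\},p(\cdot)\big\rangle$
with $\hat u_{ikj}(s_i,s_{-i})=(L_{ik}(s_i,s_{-i}))_j$, where the $L_{ik}$ come from writing $u_i(s_i,s_{-i},\theta_1,\ldots,\theta_n)=\sum_{k\in I}(L_{ik}(s_i,s_{-i}))^t\theta_k$. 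The whole task, then, is to argue that $L_{ik}\equiv 0$ for $k\neq i$, so that the double sum collapses to the single sum in item~(3) of Definition~\ref{Def:MG}.

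First I would normalise as in the proof of Theorem~\ref{General MG}: divide each nonzero $\theta_i\in\Theta_i$ by $\sum_{j\in J}\theta_{ij}>0$ so that we may assume $\Theta_i\subseteq \Sigma^{m-1}$. Then I would exploit the own-type-linearity hypothesis, which says $u_i(s_i,s_{-i},\theta_i,\theta_{-i})$ is a function of $(s_i,s_{-i},\theta_i)$ alone. Fix any action profile $(s_i,s_{-i})$ and any two type profiles $(\theta_i,\theta_{-i})$ and $(\theta_i,\theta'_{-i})$ that share the same $i$-th coordinate; subtracting the two linear expressions gives $\sum_{k\neq i}(L_{ik}(s_i,s_{-i}))^t(\theta_k-\theta'_k)=0$. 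Since each $\Theta_k$ (for $k\neq i$) contains vectors whose differences span an affine subspace of $\Sigma^{m-1}$ of sufficient dimension (and one can freely extend $L_{ik}$ to all of $\mathbb{R}^m$ without loss, because only the values on $\Theta_k$ matter), this forces $L_{ik}\equiv 0$ on the relevant span; equivalently, we may simply choose the representation with $L_{ik}=0$ for $k\neq i$.

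With this in hand, define $u_{ij}(s_i,s_{-i}):=\hat u_{iij}(s_i,s_{-i})=(L_{ii}(s_i,s_{-i}))_j$ and form the multi-game
$G'=\big\langle I,J,\{G_j\}_{j\in J},\{A_i\},\{\Theta_i\},\{u_{ij}\},p(\cdot)\big\rangle$.
Then agent $i$'s payoff in $G'$ is
\[
\sum_{j\in J}u_{ij}(s_i,s_{-i})\theta_{ij}=\sum_{j\in J}(L_{ii}(s_i,s_{-i}))_j\theta_{ij}=(L_{ii}(s_i,s_{-i}))^t\theta_i=u_i(s_i,s_{-i},\theta_i,\theta_{-i}),
\]
establishing the equivalence.

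The only subtle step is the vanishing argument for $L_{ik}$, $k\neq i$: one must be a little careful because the linear representation of a function on $\Theta_k\subseteq\Sigma^{m-1}$ is not unique (one can add any multiple of $(1,\ldots,1)$ to $L_{ik}$ without changing its restriction to the simplex). The cleanest way around this is to observe that one has freedom in choosing the $L_{ik}$, and own-type-linearity guarantees the existence of a choice with $L_{ik}=0$ for $k\neq i$; I expect this bookkeeping to be the main (though still minor) obstacle in turning the sketch into a rigorous proof.
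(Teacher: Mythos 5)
Your proposal is correct and follows essentially the same route as the paper: both specialise the construction of Theorem~\ref{General MG} by taking $\hat u_{ikj}=0$ (equivalently $L_{ik}=0$) for $k\neq i$ and setting $u_{ij}=\hat u_{iij}$. The only difference is that you spell out why the cross-agent coefficients can be chosen to vanish (noting the non-uniqueness of the linear representation on the simplex), a point the paper's one-line proof leaves implicit.
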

\begin{proof}
		Let 
	$\hat{G}=\big<I, J, \{G_{j}\}_{j \in J}, \{\Theta_i\}_{i \in I}, \{A_{i}\}_{i\in I},  \{\hat{u}_{ij}\}_{i\in I, j\in J}, p(\cdot)\big>$
	be a  MG such that agent $i$'s utility function for the
	basic game $G_{j}$ is given by $\hat{u}_{ij}=\hat{u}_{iij}$ for each $i \in I$, $j\in J$, where $\hat{u}_{iij}$ is as in the proof of Theorem~\ref{General MG} (with $\hat{u}_{ikj}=0$ for $k\neq i$).
\end{proof}	Note that the normalisation in the proof of Theorem~\ref{General MG} implies that any type $\theta_i\in \Theta_i$ satisfies $\sum_{j=1}^m\theta_{ij}=1$. For a eneralised multi-game, this means that the type component $\theta_{ij}\geq 0$ is the proportion of the agent $i$'s type allocated to the basic game $G_{ij}$. For a multi-game, this means that the type component $\theta_{ij}\geq 0$ is the proportion of the agent $i$'s type allocated to the basic game $G_j$. 

Multi-games, equivalently own-type-linear Bayesian games, can therefore model the behaviour
of a finite number of rational agents who play in a number of different environments
simultaneously, where each environment is represented by a basic game and
the resources of each agent are allocated with varying proportions, as private
information, to these basic games. 
\section{Regular Bayesian Nash equilibrium}
By Theorem~\ref{motivation-Regularity},  a Bayesian game has a BNE for all prior if and only if it is type-regular.
In this section,  we investigate a necessary and sufficient condition for type-regularity of a type-linear multi-dimensional Bayesian game $G$. 
 By normalization, we can assume that $\Theta_i \subset \Sigma^{m-1}$. Since utility functions are linear on  types, we can extended these utility functions to $\Sigma^{m-1}$. Therefore, without loss of generality, from now on, we assume $\Theta_i=\Sigma^{m-1}$. Thus, the type space is $\Theta=\big(\Sigma^{m-1}\big)^n$ and its boundary is given by $\bigcup_{i\in I}\Theta_i\times V^{n-1}$, where for clarity we have written agent $i$'s type space as $\Theta_i$ rather than $\Sigma^{m-1}$ and $V$, recall, is the set of vertices of $\Sigma^{m-1}$. Thus, $V^n$ is the set of vertices of $\Theta$.
We aim to show that an own-type-linear Bayesian game $G$ is type-regular if and only if $G$ is type-regular on $V^n$. The proofs of the following two lemmas are given in Section~\ref{appendix}.
\begin{lemma}\label{Simlex-Make}
	If an own-type-linear Bayesian game $G$ is  type-regular on $V \times \{\theta_{-i}\}$ for a given $i\in I$ and $\theta_{-i} \in \Theta_{-i}$, then it is type-regular on  $\Theta_i \times \{\theta_{-i}\}$.
\end{lemma}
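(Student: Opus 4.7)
The plan is to extend the vertex data for agent $i$ to the whole simplex by convex interpolation, while keeping every other agent's strategy unchanged. Let $\sigma^*_i : V \to \Delta(A_i)$ and $\sigma^*_k(\theta_k) \in \Delta(A_k)$ for $k \neq i$ witness type-regularity on $V \times \{\theta_{-i}\}$. I would set $\sigma^{**}_k(\theta_k) := \sigma^*_k(\theta_k)$ for every $k \neq i$, and for $\theta_i = \sum_{j=1}^{m} \theta_{ij} v_j \in \Sigma^{m-1}$, define
\[
\sigma^{**}_i(\theta_i) := \sum_{j=1}^{m} \theta_{ij}\, \sigma^*_i(v_j) \in \Delta(A_i),
\]
so that $\sigma^{**}_i$ is the affine extension of $\sigma^*_i$ from the vertices to $\Sigma^{m-1}$. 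By Corollary~\ref{MG-equivalent} I may view $G$ as the equivalent multi-game with basic games $G_1, \ldots, G_m$, so that agent $i$'s payoff at type $\theta_i$ reads $\sum_j \theta_{ij} u_{ij}$ while each other agent's payoff is independent of $\theta_i$.

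For each $\theta_i \in \Sigma^{m-1}$ I then need to verify that $(\sigma^{**}_i(\theta_i), \sigma^{**}_{-i}(\theta_{-i}))$ is a Nash equilibrium of the local game $G_{(\theta_i, \theta_{-i})}$. For any agent $k \neq i$, own-type-linearity removes $\theta_i$ from $u_k$, and multilinearity of the utility in the strategies yields
\[
u_k\bigl(\sigma_k, \sigma^{**}_i(\theta_i), \sigma^*_{-i,-k}(\theta_{-i,-k}), \theta_k\bigr) = \sum_{j=1}^{m} \theta_{ij}\, u_k\bigl(\sigma_k, \sigma^*_i(v_j), \sigma^*_{-i,-k}(\theta_{-i,-k}), \theta_k\bigr),
\]
and each summand on the right is maximized at $\sigma_k = \sigma^*_k(\theta_k)$ by the vertex Nash equilibrium at $(v_j, \theta_{-i})$; since $\theta_{ij} \geq 0$, the weighted sum is likewise maximized at $\sigma^{**}_k(\theta_k) = \sigma^*_k(\theta_k)$, establishing agent $k$'s best-response condition.

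The remaining step, which I expect to be the main obstacle, is the best-response condition for agent $i$ itself. Own-type-linearity reduces it to showing
\[
\sigma^{**}_i(\theta_i) \in \arg\max_{\sigma_i \in \Delta(A_i)} \sum_{j=1}^{m} \theta_{ij}\, u_{ij}\bigl(\sigma_i, \sigma^*_{-i}(\theta_{-i})\bigr).
\]
I would attack this by using multilinearity in agent $i$'s strategy to expand
\[
\sum_{j} \theta_{ij}\, u_{ij}\bigl(\sigma^{**}_i(\theta_i), \sigma^*_{-i}(\theta_{-i})\bigr) = \sum_{j,k=1}^{m} \theta_{ij}\theta_{ik}\, u_{ij}\bigl(\sigma^*_i(v_k), \sigma^*_{-i}(\theta_{-i})\bigr),
\]
and then comparing with $\sum_j \theta_{ij} u_{ij}(\sigma_i, \sigma^*_{-i}(\theta_{-i}))$ for an arbitrary deviation $\sigma_i \in \Delta(A_i)$, invoking the vertex optimality relations $u_{ij}(\sigma^*_i(v_j), \sigma^*_{-i}(\theta_{-i})) \geq u_{ij}(\sigma_i, \sigma^*_{-i}(\theta_{-i}))$ for each basic game $G_j$. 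Closing this inequality --- so that the quadratic form in the $\theta_{ij}$ on the right dominates the linear one uniformly over $\sigma_i$ and $\theta_i \in \Sigma^{m-1}$ --- is the crux of the argument, and is where the full multi-game structure, rather than merely the local game at $\theta_i$, will have to be exploited.
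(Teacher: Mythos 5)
Your construction is the same as the paper's: extend $\sigma^*_i$ affinely from the vertices and keep the opponents' strategies fixed. Your verification for the opponents $k\neq i$ is correct, and in fact more careful than the paper's, which never checks their best-response conditions: since $u_k$ does not involve $\theta_i$ and is linear in agent $i$'s mixed strategy, a $\sigma^*_k(\theta_k)$ that best-responds against each $\sigma^*_i(v_j)$ also best-responds against any convex combination of them. The genuine gap is the step you yourself flag as the crux, agent $i$'s own best-response condition, and it cannot be closed: the inequality you are trying to establish is false. Take $m=2$, give agent $i$ two actions $a,b$ with $u_{i1}(a,\sigma_{-i})=1$, $u_{i1}(b,\sigma_{-i})=0$, $u_{i2}(a,\sigma_{-i})=0$, $u_{i2}(b,\sigma_{-i})=2$, and make all other agents' payoffs constant so that they always best-respond. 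The vertex equilibria force $\sigma^*_i(v_1)=a$ and $\sigma^*_i(v_2)=b$; at $\theta_i=(1/2,1/2)$ your quadratic form $\sum_{j,k}\theta_{ij}\theta_{ik}\,u_{ij}(\sigma^*_i(v_k),\sigma_{-i})$ equals $3/4$, while the pure deviation to $b$ yields $1$. Vertex optimality bounds only the diagonal terms $j=k$; nothing controls $u_{ij}(\sigma^*_i(v_k),\sigma_{-i})$ for $k\neq j$. (The paper's own proof of Lemma~\ref{Simlex-Make} founders at exactly this point: its displayed computation reuses the summation index $j$ twice, which silently replaces the quadratic form by its diagonal and amounts to assuming $u_{ij}(\sigma^*_i(v_k),\sigma_{-i})\geq u_{ij}(\sigma_i,\sigma_{-i})$ for $k\neq j$ as well.)

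Nor can the gap be repaired by choosing $\sigma^*_i(\theta_i)$ more cleverly off the vertices; the statement itself needs extra hypotheses. Add a third action $c$ for agent $i$ with $u_{i1}(c,\cdot)=u_{i2}(c,\cdot)=3/5$ in the example above (and $u_{i2}(b,\cdot)=1$ rather than $2$), and give an opponent two actions $x,y$ with a strict preference for $x$ against $a$ or $b$ and for $y$ against $c$. The local games at $(v_1,\theta_{-i})$, $(v_2,\theta_{-i})$ and $((1/2,1/2),\theta_{-i})$ then have unique equilibria $(a,x)$, $(b,x)$ and $(c,y)$ respectively, so the game is type-regular on $V\times\{\theta_{-i}\}$, yet no single opponent strategy serves for all of $\Theta_i\times\{\theta_{-i}\}$ as Definition~\ref{reg-def} requires. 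So your instinct that closing the final inequality is ``where the full multi-game structure will have to be exploited'' correctly locates the difficulty, but the difficulty is not a missing computation: some additional assumption (e.g.\ that each $\sigma^*_i(v_j)$ is optimal against $\sigma_{-i}$ in every basic game $G_k$, not only in $G_j$) is genuinely needed, both for your argument and for the paper's.
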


\begin{corollary}\label{extention-regularity}
	Suppose an own-type-linear Bayesian game  is type-regular on $V \times \{\theta_{-i}\}$ for a given $i\in I$ and $\theta_{-i} \in \Theta_{-i}$ with  $\sigma^*_i(\cdot): V \to \Delta(A_i)$  and $\sigma_{-i} \in \Delta(A_{-i})$ as a witness for  type-regularity.
	Then the extended function $\overline{\sigma^*_i}(\cdot): \Theta_{i} \to \Delta(A_i)$ with
	$\overline{\sigma^*_i}(\theta_{i})=	\sum_{j=1}^{m}\theta_{ij}\sigma^*_i(v_j)$
	induces type-regularity on $\Theta_i\times \{\theta_{-i}\}$ .
\end{corollary}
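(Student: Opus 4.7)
The plan is to check directly that for every $\theta_i\in\Theta_i$ the profile $\big(\overline{\sigma^*_i}(\theta_i),\sigma_{-i}\big)$ is a Nash equilibrium of the local game $G_{(\theta_i,\theta_{-i})}$; by Definition~\ref{reg-def}, this exhibits $\overline{\sigma^*_i}$, together with the given constant profile $\sigma_{-i}$ for the remaining agents, as a witness of type-regularity on $\Theta_i\times\{\theta_{-i}\}$. Two linearities drive the verification: by own-type-linearity $u_i$ depends linearly on $\theta_i$ while $u_k$ for $k\neq i$ does not depend on $\theta_i$ at all, and expected utility is multilinear in the mixed-strategy components.

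For each agent $k\neq i$ I would handle the best-response condition first. Since $u_k$ is independent of $\theta_i$, the vertex hypothesis directly gives that $\sigma_k$ is a best response to $\sigma^*_i(v_j)$ (together with the remaining entries of $\sigma_{-i}$) at $\theta_{-i}$ for every $v_j\in V$. Multiplying each such inequality by $\theta_{ij}\geq 0$, summing over $j$, and using linearity of $u_k$ in the $\sigma_i$-component, I would conclude that $\sigma_k$ remains a best response against the mixture $\overline{\sigma^*_i}(\theta_i)$ at $(\theta_i,\theta_{-i})$.

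For agent $i$ I would combine both linearities. Writing $\theta_i=\sum_k\theta_{ik}v_k$, for any pure deviation $a_i\in A_i$ own-type-linearity of $u_i$ and multilinearity in $\sigma_i$ yield
\[
u_i\big(\overline{\sigma^*_i}(\theta_i),\sigma_{-i},\theta_i,\theta_{-i}\big)=\sum_{j,k}\theta_{ij}\theta_{ik}\,u_i\big(\sigma^*_i(v_j),\sigma_{-i},v_k,\theta_{-i}\big),
\]
to be compared with $u_i(a_i,\sigma_{-i},\theta_i,\theta_{-i})=\sum_k\theta_{ik}\,u_i(a_i,\sigma_{-i},v_k,\theta_{-i})$. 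The vertex Nash property $u_i\big(\sigma^*_i(v_k),\sigma_{-i},v_k,\theta_{-i}\big)\geq u_i(a_i,\sigma_{-i},v_k,\theta_{-i})$ bounds the diagonal terms $j=k$ immediately. The main obstacle is the off-diagonal contributions, in which $\sigma^*_i(v_j)$ with $j\neq k$ is played against $\sigma_{-i}$ at the \emph{different} vertex type $v_k$, and is not a priori a best response there. I expect to close this gap by invoking Lemma~\ref{Simlex-Make}, which already guarantees the existence of \emph{some} best-response extension on $\Theta_i$, and then identifying that extension with the explicit convex combination $\overline{\sigma^*_i}$ by exploiting the structure of the local game's best-response correspondence under own-type-linearity.
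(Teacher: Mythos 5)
Your handling of the agents $k\neq i$ is correct, and for agent $i$ you have put your finger on exactly the right place: the off-diagonal terms $\theta_{ij}\theta_{ik}\,u_i\big(\sigma^*_i(v_j),\sigma_{-i},v_k,\theta_{-i}\big)$ with $j\neq k$ are the crux. However, the repair you propose cannot work. Lemma~\ref{Simlex-Make} only asserts the \emph{existence} of some witness on $\Theta_i\times\{\theta_{-i}\}$; it gives no licence to identify that witness with the convex combination $\overline{\sigma^*_i}$, and in general the two are different objects. Concretely, take $n=2$, $m=2$, $A_1=\{a,b\}$, let agent $2$'s payoffs be identically zero, and let agent $1$'s basic-game payoffs (independent of agent $2$'s action) be $u_{11}(a)=2$, $u_{11}(b)=0$, $u_{12}(a)=0$, $u_{12}(b)=1$. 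The vertex witnesses are forced: $\sigma^*_1(v_1)=a$ and $\sigma^*_1(v_2)=b$. At $\theta_1=(1/2,1/2)$ the pure action $a$ earns $1$ while $\overline{\sigma^*_1}(\theta_1)=\tfrac12 a+\tfrac12 b$ earns only $3/4$, so $\overline{\sigma^*_1}(\theta_1)$ is not a best response; the genuine best-response selection here is the step function equal to $a$ for $\theta_{11}\geq 1/3$ and to $b$ otherwise, which is not of the form $\sum_j\theta_{ij}\sigma^*_i(v_j)$. Since agent $i$'s local objective $\sigma_i\mapsto\sum_k\theta_{ik}u_{ik}(\sigma_i,\sigma_{-i})$ is linear, a convex combination of the vertex-optimal strategies is optimal at $\theta_i$ only when each $\sigma^*_i(v_j)$ with $\theta_{ij}>0$ is itself optimal for the aggregated objective, which the hypotheses do not guarantee.

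So the obstacle you identified is not a defect of your write-up so much as of the statement itself: without further assumptions (for instance, that every $\sigma^*_i(v_j)$ is a best response at every vertex $v_k$, so that the off-diagonal terms coincide with the diagonal ones --- as happens when the basic games share a common equilibrium, cf.\ Example~\ref{CoordinationGame}), the conclusion fails. For comparison, the paper's own proof of Lemma~\ref{Simlex-Make} evaluates $u_i\big(\overline{\sigma^*_i}(\theta_i),\sigma_{-i},\theta_i,\theta_{-i}\big)$ as a double sum in which both summation indices are written as $j$; this index collision silently replaces the cross terms $u_{ij}\big(\sigma^*_i(v_k),\sigma_{-i}\big)$, $k\neq j$, by the diagonal terms $u_{ij}\big(\sigma^*_i(v_j),\sigma_{-i}\big)$, which is precisely the unjustified step that your more careful expansion exposes. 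Your accounting is therefore the right diagnosis, but the argument cannot be completed along the lines you sketch, and Corollary~\ref{extention-regularity} needs an additional hypothesis to be true.
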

Recall that $V^n$ is the set of vertices of the type space $\Theta$.

\begin{lemma}\label{Simlex-Make-Booundary}
	If an own-type-linear Bayesian game  $G$ is type-regular on $V^n$, then $G$ is type-regular on the boundary $\bigcup_{i\in I} \Theta_i\times V^{n-1}$ of the type space $\Theta$.
\end{lemma}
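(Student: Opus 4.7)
The plan is to show that the canonical convex-combination extension of the vertex strategy maps serves as a single witness on the entire boundary. Concretely, let $\sigma_1^*, \ldots, \sigma_n^*$, each with domain $V$, be the strategy maps witnessing type-regularity on $V^n$, and extend each to $\overline{\sigma_i^*}:\Theta_i \to \Delta(A_i)$ by $\overline{\sigma_i^*}(\theta_i)=\sum_{j=1}^m \theta_{ij}\sigma_i^*(v_j)$, as in Corollary~\ref{extention-regularity}. I would then claim that the profile $(\overline{\sigma_1^*},\ldots,\overline{\sigma_n^*})$ witnesses type-regularity on $\bigcup_{i\in I}\Theta_i\times V^{n-1}$.

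To verify this, I would fix an arbitrary boundary point; after relabelling it has the form $(\theta_i,\theta_{-i})$ with $\theta_i\in\Theta_i$ free and $\theta_{-i}=(v_{j_1},\ldots,v_{j_{i-1}},v_{j_{i+1}},\ldots,v_{j_n})\in V^{n-1}$. Type-regularity on $V^n$, restricted to the slice where only agent $i$'s type varies across $V$ while the other agents sit at the fixed $v_{j_k}$'s, immediately yields type-regularity on $V\times\{\theta_{-i}\}$, with witness $\sigma_i^*$ and opponent mixed profile $\sigma_{-i}:=(\sigma_k^*(v_{j_k}))_{k\neq i}\in\Delta(A_{-i})$. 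Applying Corollary~\ref{extention-regularity} to this slice upgrades the witness to type-regularity on $\Theta_i\times\{\theta_{-i}\}$, so that the pair $(\overline{\sigma_i^*}(\theta_i),\sigma_{-i})$ is a NE of the local game $G_{(\theta_i,\theta_{-i})}$ for every $\theta_i\in\Theta_i$.

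To translate this into a statement about the global strategy map profile $(\overline{\sigma_1^*},\ldots,\overline{\sigma_n^*})$, I would observe that the extension satisfies $\overline{\sigma_k^*}(v_{j_k})=\sigma_k^*(v_{j_k})$ at every vertex, so $\sigma_{-i}$ coincides with $(\overline{\sigma_k^*}(\theta_k))_{k\neq i}$ at the given boundary point. Consequently $(\overline{\sigma_1^*}(\theta_1),\ldots,\overline{\sigma_n^*}(\theta_n))$ is a NE of $G_{(\theta_1,\ldots,\theta_n)}$, which is exactly what Definition~\ref{reg-def} demands. The potential worry is that letting the distinguished agent $i$ range over $I$ might force us to use conflicting extensions on different faces, but since each $\overline{\sigma_k^*}$ is uniquely and canonically determined by the vertex values $\sigma_k^*(v_1),\ldots,\sigma_k^*(v_m)$ supplied by the hypothesis on $V^n$, no inconsistency can arise; articulating this globally coherent choice of witness is, I expect, the one nontrivial bookkeeping point to address.
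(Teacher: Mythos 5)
Your proposal is correct and follows essentially the same route as the paper: take the vertex witnesses $\sigma_i^*$ from type-regularity on $V^n$, apply Corollary~\ref{extention-regularity} on each slice $\Theta_i\times\{\theta_{-i}\}$ with $\theta_{-i}\in V^{n-1}$ to obtain the convex-combination extension $\overline{\sigma_i^*}$, and observe that these extensions are canonical (independent of the face), hence form a single coherent witness on the whole boundary. Your explicit check that $\overline{\sigma_k^*}$ agrees with $\sigma_k^*$ at the vertices makes the coherence point slightly more carefully than the paper's one-line remark, but the argument is the same.
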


\begin{theorem}\label{Bounadary}
	If an own-type-linear Bayesian game  is type-regular on the boundary $\bigcup_{i\in I} \Theta_i\times V^{n-1}$ of the type space $\Theta$, then it is type-regular.
\end{theorem}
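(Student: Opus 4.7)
The plan is to prove type-regularity of $G$ on all of $\Theta$ by induction on the number of non-vertex components of a type profile, leveraging Corollary~\ref{extention-regularity} at each inductive step. Let $\sigma_i^*:\Theta_i\to\Delta(A_i)$ for $i\in I$ be the witnesses of type-regularity on the boundary $\bigcup_{i\in I}\Theta_i\times V^{n-1}$ (note that projecting this boundary onto the $i$-th factor already yields all of $\Theta_i$, so each $\sigma_i^*$ is defined on the full simplex). Define candidate witnesses for full type-regularity by the linear extension from vertex values,
\[
\overline{\sigma_i^*}(\theta_i) \;=\; \sum_{j=1}^{m}\theta_{ij}\,\sigma_i^*(v_j),
\]
so that $\overline{\sigma_i^*}$ agrees with $\sigma_i^*$ on $V$. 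For $k=0,1,\dots,n$ let $N_k\subseteq\Theta$ denote the set of type profiles with at most $k$ non-vertex components, so $N_n=\Theta$.

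I will show by induction on $k$ that for every $(\theta_1,\dots,\theta_n)\in N_k$ the strategy profile $\bigl(\overline{\sigma_1^*}(\theta_1),\dots,\overline{\sigma_n^*}(\theta_n)\bigr)$ is a NE of $G_{(\theta_1,\dots,\theta_n)}$, which yields the theorem. For the base case $k=0$, the profile lies in $V^n$, which is contained in the boundary, and since $\overline{\sigma_i^*}=\sigma_i^*$ on $V$, the claim is immediate from boundary type-regularity. For the inductive step $k\to k+1$, take $(\theta_1,\dots,\theta_n)\in N_{k+1}\setminus N_k$ and assume without loss of generality that $\theta_1$ is non-vertex. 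For every vertex $v_j\in V$, the profile $(v_j,\theta_{-1})$ has exactly $k$ non-vertex components, so the inductive hypothesis gives that $\bigl(\sigma_1^*(v_j),\overline{\sigma_{-1}^*}(\theta_{-1})\bigr)$ is a NE for $G_{(v_j,\theta_{-1})}$, where $\overline{\sigma_{-1}^*}(\theta_{-1})$ denotes the product of $\overline{\sigma_k^*}(\theta_k)$ over $k\neq 1$. This says precisely that $G$ is type-regular on $V\times\{\theta_{-1}\}$ with witness $\sigma_1^*|_V:V\to\Delta(A_1)$ for agent $1$ and the single fixed mixed profile $\overline{\sigma_{-1}^*}(\theta_{-1})\in\Delta(A_{-1})$ for the other agents. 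Applying Corollary~\ref{extention-regularity} with $i=1$ then upgrades this to type-regularity on $\Theta_1\times\{\theta_{-1}\}$, and the extended witness it produces for agent $1$ is exactly $\overline{\sigma_1^*}$ by construction; in particular $\bigl(\overline{\sigma_1^*}(\theta_1),\overline{\sigma_{-1}^*}(\theta_{-1})\bigr)$ is a NE for $G_{(\theta_1,\theta_{-1})}$, closing the induction.

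The only subtle point is that Corollary~\ref{extention-regularity} requires a single mixed profile $\sigma_{-i}\in\Delta(A_{-i})$ that serves as witness simultaneously for every vertex input $v_j$, and this is granted here because $\theta_{-1}$ is held fixed throughout the inductive step, so $\overline{\sigma_{-1}^*}(\theta_{-1})$ does not depend on $v_j$. Apart from this bookkeeping, the genuinely analytic content—extending type-regularity along one agent's simplex by exploiting linearity of the own-type-linear payoffs—is fully packaged inside Corollary~\ref{extention-regularity}, so the remaining work is exactly the short induction above.
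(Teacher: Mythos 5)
Your induction on the number of non-vertex coordinates is a genuinely different organization from the paper's own proof, which tries to argue in one shot that the boundary witnesses remain Nash equilibria at every interior profile; your version has the merit of making explicit how one passes from profiles with $k$ non-vertex components to $k+1$, a step the paper's argument glosses over (it only ever invokes Nash-ness at local games in which all of agent $i$'s opponents sit at vertices, even though the strategy profile it plugs in is evaluated at arbitrary non-vertex $\theta_{-i}$). Granting Corollary~\ref{extention-regularity} as a black box, your induction does close: the base case is the restriction of the boundary hypothesis to $V^n$, the inductive hypothesis supplies type-regularity on $V\times\{\theta_{-1}\}$ with the single fixed opponent profile $\overline{\sigma^*_{-1}}(\theta_{-1})$, and the corollary peels off one coordinate.

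The genuine gap sits exactly where you say the ``analytic content is fully packaged inside Corollary~\ref{extention-regularity}'': that corollary is false as stated, so the inductive step that invokes it fails. The corollary asserts that if $\sigma^*_i(v_j)$ is a best response to a fixed $\sigma_{-i}$ under $u_{ij}$ for each $j$, then $\overline{\sigma^*_i}(\theta_i)=\sum_j\theta_{ij}\sigma^*_i(v_j)$ is a best response to $\sigma_{-i}$ under $\sum_j\theta_{ij}u_{ij}$. But optimality of each $\sigma^*_i(v_j)$ for its own objective $u_{ij}$ says nothing about the cross terms $u_{ij}\big(\sigma^*_i(v_l),\sigma_{-i}\big)$ for $l\neq j$. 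Concretely, take $m=2$, give agent $2$ a single action $c$, and set $u_{11}(a,c)=2$, $u_{11}(b,c)=0$, $u_{12}(a,c)=0$, $u_{12}(b,c)=1$; then $\sigma^*_1(v_1)=a$ and $\sigma^*_1(v_2)=b$ witness type-regularity on $V\times\{\theta_2\}$, yet at $\theta_1=(\tfrac12,\tfrac12)$ the extension $\tfrac12 a+\tfrac12 b$ earns $\tfrac34$ against $c$ while the pure action $a$ earns $1$, so $\big(\overline{\sigma^*_1}(\theta_1),c\big)$ is not a Nash equilibrium of that local game. What \emph{is} true, and what any repair must exploit, is the dual fact: a strategy that best responds to each of several opponent profiles under a fixed utility also best responds to any convex combination of them, since expected utility is linear in the opponents' mixed profile. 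That would let you keep the original boundary witness $\sigma^*_1(\theta_1)$ against $\overline{\sigma^*_{-1}}(\theta_{-1})$, but then the strategy agent $1$ plays and the strategy the other agents must best respond to no longer coincide, so the argument still does not close without additional hypotheses. The paper's own proof rests on the same defective extension step, so this is not a flaw you introduced; but as written your proof does not establish the theorem.
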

\begin{proof}
   By Theorem~\ref{MG-equivalent}, we can assume  $G$ is a type-regular multi-game on $\bigcup_{i\in I} \Theta_i\times V^{n-1}$. Then for each $i \in I$, there exists
   $\sigma^*_i: \Theta_i \to \Delta(A_i)$ and 
   $\sigma^*_{-i}: V^{n-1} \to \Delta(A_{-i})$ 
		such that 
		$(\sigma^*_i(\theta_i), \sigma^*_{-i}(\theta_{-i}) )$ is a NE for $G_{(\theta_i, \theta_{-i})}$  for each
		$(\theta_{i}, \theta_{-i}) \in \Theta_{i} \times V^{n-1}$.
		We claim that $(\sigma^*_1(\theta_{1}), \ldots,\sigma^*_n(\theta_{n}))$  is a NE for 
		$G_{(\theta_1, \ldots, \theta_n)}$ for each $(\theta_1, \ldots, \theta_n)^t \in \Theta$.
		For each agent $i\in I$, we have
		\begin{align*}
		u_i\big(\sigma^*_i(\theta_i), \sigma^*_{-i}(\theta_{-i}),\theta_i,\theta_{-i}\big)&= 
		\sum_{j=1}^{m}{\theta_{ij}u_i\big(\sigma^*_i(\theta_i), \sigma^*_{-i}(\theta_{-i}), v_j, \theta_{-i}\big)}
		\end{align*}
		Since  $(\sigma^*_i(\theta_i), \sigma^*_{-i}(\theta_{-i}))$ is a NE for 
		 the local game $G_{(v_{j_1}, \ldots, v_{j_{i-1}},\theta_{i}, v_{j_{i+1}}, \ldots, v_{j_n})}$ we have
		$$\sum_{j=1}^{m}{\theta_{ij}u_i\big(\sigma^*_i(\theta_i^j), \sigma^*_{-i}(\theta_{-i}), \theta_i, \theta_{-i}\big)}
		\geq 
		\sum_{j=1}^{m}{\theta_{ij}u_i\big(\sigma'_i, \sigma^*_{-i}(\theta_{-i}), v_j, \theta_{-i}\big)},
		$$		for each $\sigma'_i \in \Delta(A_i)$.	
		Hence 
		$u_i\big(\sigma^*_i(\theta_i), \sigma^*_{-i}(\theta_{-i}) ,\theta_i,\theta_{-i}\big) \geq u_i\big(\sigma'_i, \sigma^*_{-i}(\theta_{-i}),\theta_i,\theta_{-i}\big)$
		which implies that $G$ is type-regular.
\end{proof}
From Lemma~\ref{Simlex-Make-Booundary} and Theorem~\ref{Bounadary}, we obtain:
\begin{corollary} \label{suffucient and nassacery condition for multi-game}
	If an own-type-linear Bayesian game  $G$  is type-regular on $V^n$, then it is type-regular and thus has a prior independent BNE.
\end{corollary}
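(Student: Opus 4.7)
The plan is simply to chain the two preceding results, so the proof is essentially a one-line composition. First I would apply Lemma~\ref{Simlex-Make-Booundary} to the hypothesis that $G$ is type-regular on the vertex set $V^n$ of $\Theta = (\Sigma^{m-1})^n$; this lemma upgrades vertex type-regularity to type-regularity on the boundary $\bigcup_{i\in I}\Theta_i \times V^{n-1}$. Then I would apply Theorem~\ref{Bounadary} directly to this boundary type-regularity to conclude that $G$ is type-regular on the whole of $\Theta$. Finally, I would invoke Theorem~\ref{motivation-Regularity} to translate type-regularity into the existence of a BNE that is independent of the prior.

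Since the corollary is literally the composition of Lemma~\ref{Simlex-Make-Booundary} and Theorem~\ref{Bounadary} (followed by Theorem~\ref{motivation-Regularity}), there is no genuine obstacle here: the two substantive steps have already been done. It is worth noting where each piece uses own-type linearity. The first step rests on Corollary~\ref{extention-regularity}, which extends a regular strategy from the vertices $V$ of the simplex to all of $\Sigma^{m-1}$ by the convex combination $\overline{\sigma^*_i}(\theta_i) = \sum_{j=1}^{m}\theta_{ij}\sigma^*_i(v_j)$; this works precisely because agent $i$'s payoff is linear in $\theta_i$. The second step runs the linearity in the opposite direction: with the resulting strategy profile $(\sigma^*_1(\theta_1),\ldots,\sigma^*_n(\theta_n))$ fixed, the deviation gain of agent $i$ is linear (indeed, affine in the relevant $\theta_{-i}$ coordinates when decomposed along the vertices), so the non-negativity of the deviation gain at every vertex configuration in $V^{n-1}$ propagates by convex combination to the whole of $\Theta_{-i}$.

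So the proof I would write would consist of two sentences: "By Lemma~\ref{Simlex-Make-Booundary}, $G$ is type-regular on $\bigcup_{i\in I}\Theta_i \times V^{n-1}$. By Theorem~\ref{Bounadary}, $G$ is type-regular, and hence by Theorem~\ref{motivation-Regularity} it admits a prior-independent BNE." No new calculation is required, which is exactly the payoff of having set up the two lemmas in the preceding subsections.
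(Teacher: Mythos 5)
Your proof is correct and is exactly the paper's argument: the corollary is stated there as an immediate consequence of Lemma~\ref{Simlex-Make-Booundary} followed by Theorem~\ref{Bounadary}, with Theorem~\ref{motivation-Regularity} supplying the prior-independent BNE. Your added remarks on where own-type-linearity enters are accurate but not needed for the deduction.
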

Note that in Corollary~\ref{suffucient and nassacery condition for multi-game}, the agents' type spaces can be infinite or continuous and we will still have a prior independent BNE. Corollary~\ref{suffucient and nassacery condition for multi-game} provides a method to construct type-regular Bayesian games by ensuring that it is type-regular on $V^n$ as we will see in the next two sections.

\section{Examples of decision making in multi-environments}\label{vector Typed Bayesian Games}

In Section~\ref{appendix}, we indicate how own-type-linear Bayesian games, equivalently, multi-games, can be used to model production by several companies in multi-markets. In this section, we show how multi-games can be used to create new two dimensional Bayesian models for well-known the PD and the Trust game. 

A multi-game is called a \emph{Double Game} (DG)  if $m=2$. In a DG, it is convenient to write the type $(\theta_{i1},\theta_{i2})$ of agent $i$ as $\theta_i:=\theta_{i2}$ with $\theta_{i1}=1-\theta_{i2}=1-\theta_i$. Thus, for an $n$-agent DG we have $\Theta_i \subseteq [0,1]$ and $\Theta=\prod_{i \in I} \Theta_i =[0,1]^n$ with the boundary of $\Theta$ given by $\{0,1\}^n$.

\subsection{A Double Game for Prisoner's Dilemma }~\label{dgpd} As argued in the Introduction, in many circumstances, human beings consider not only their material score, but also the social payoffs of any decision they
make. 
This can be modelled by allowing agents to engage simultaneously in a social game and a standard material game for their utilities. We will show this for the case of the PD.

Consider the standard PD with the payoffs as given in Table~\ref{tab:6} (left) with $t > r> p > s$ and $r>(t+s)/2$ as in~\cite{axelrod2006evolution}.
The social game (SG) encourages cooperation and discourages defection,
as cooperating is usually considered to be the right ethical and moral choice  when interacting with others in social dilemmas. This can be done in different ways which correspond to different types of payoff matrices. Here, we will only consider the case in which  SG encourages cooperation and discourages defection for each agent, independently of the action chosen by the other agent. We present the normal form and the mathematical formulation of the SG as follows. Assume that the competing participants in the SG are agents 1 and 2. Each of them can select  $C$ or $D$.
When they have both made their choice, the payoffs assigned to them are calculated according to 
Table~\ref{tab:6}(right).
\begin{table} 
	\centering
	\begin{tabular}{|c|c|c|}
		\cline{2-3} 
		\multicolumn{1}{c|}{} & $C$ & $D$  \\ \hline
		$C$ &$(r, r) $&$(s, t)$ \\\hline
		$D$ &$(t, s) $&$(p, p)$\\\hline
	\end{tabular}\qquad
	\begin{tabular}{|c|c|c|}
		\cline{2-3} 
		\multicolumn{1}{c|}{} & $C$ & $D$  \\ \hline
		$C$ &$(y, y) $&$(y, z)$ \\\hline
		$D$ &$(z, y) $&$(z, z)$\\\hline
	\end{tabular}
	\caption{Payoff metrics for PD (left) and SG (right)}
	\label{tab:6}
\end{table}
As in~\cite{ounsley2010dissertation}, let $y>z=s$, i.e., the SG encourages cooperation and the least payoff of the PD and SG are taken to be the same. The strategy profiles $(D,D)$ and $(C,C)$ are NEs for PD and SG, respectively. Consider a DG with basic games PD and SG.
The type $\theta_i\in[0,1]$ of agent $i=1,2$ is their {\em prosocial} coefficient, with $\theta_i=0$ reflecting  complete selfishness while $\theta_i=1$ indicating  maximum pro-sociability. It is easy to see that the DG is type-regular on $\{0,1\}^2$. By Corollary~\ref{suffucient and nassacery condition for multi-game}, it is type-regular with type-regularity witness
$\sigma^*_i(\cdot): \Theta_{i} \to \Delta(\{D, C\})$ with
$\sigma^*_i(\theta_{i})=	(1-\theta_i)D + \theta_i C$ for each agent $i=1,2$.
By Theorem~\ref{motivation-Regularity}, $(\sigma^*_1(\cdot), \sigma^*_2(\cdot))$ is a BNE for all priors.

This framework for considering the PD with a SG, we propose, is a way to model real-life situations, as, in general, decisions based on prosocial or moral incentives and beliefs do not necessarily bring the highest material benefits. 

\subsection{A Double Game for Trust Game}\label{trust}
Trust game is a 2-agent stage game $G_1$ in which 
$A_1=[0,1]$, $A_2=\{x|3y\geq x, y \in A_1\}$
and $ u_{1}(y,x)=x-y$, $u_{2}(y,x)=3y-x$
for $y\in A_1$ and $x\in A_2$. 	By  backward induction, when the first agent plays first, $(0,0)$ is the NE.
If, for the sake of illustration, we restrict agent 1's actions to $A_1'=\{0,1\}$, then  Figure~\ref{Fig:DG-for-TrustGame} shows the branches of the stage game where the two agents are named $ag_1$ and $ag_2$, respectively. As usual, the label on each edge is the action taken by the agent on the node above and udder each leaf, the first number is the payoff of agent 1 for the branch corresponding to the leaf and the second number is agent 2's payoff.

Under the standard economic assumption of rational self-interest, the predicted actions of the first agent in Trust game will be to send nothing, and any behaviour that deviates from this self-interest is viewed as irrational. Since in actual experiments, individuals significantly deviate from this NE, we argue that, as well as their material interest, they seek to build or protect their social reputation or their own ethical and prosocial values. We thus propose to develop a more realistic model of trust in economic behaviour by using a DG which includes Trust game above and a second social or conscience game, which is very basically formulated here, a follows. 

Let $G_1$ be Trust game as described above and let $G_2$ be the associated {\em conscience or social} game in which 
$A_1=[0,1]$, $A_2=\{x\geq 0|3y\geq x, y \in A_1\}$
and 
$ u_{12}(y,x)=y$ and $ u_{22}(y,x)=x-2y $
for $y\in A_1$ and $x\in A_2$.	The social payoff of agent 1 in sending amount $y$ to agent 2 is considered as the amount $y$ itself. If agent 2 receives $3y$ as a result of agent 1 sending amount $y$, it is only fair that the increment namely $2y$ be divided equally between the two agents. This stipulates the base line that agent 2 sends amount $2y$ to agent 1 so that each would have gained amount $y$. Thus, the social payoff to agent 2 is taken as the linear function $x-2y$, which is positive or negative depending on whether more than or less than $2y$ is sent back by agent 2. By  backward induction, $(1,3)$ is the NE for $G_2$.
If again, we restrict agent 1's actions to $A'_1=\{0,1\}$, then  Figure~\ref{Fig:DG-for-TrustGame} shows the branches of the stage game. 
\begin{figure}
	\centering
	\subfloat{	
		\begin{tikzpicture}[scale=.5]
		\draw[black, thick] (0,5) -- (3,3);
		\draw[black, thick] (0,5) -- (-3,3);
		\node at (-2,4.3) {\color{blue}{$0$}};
		\node at (2,4.3) {\color{blue}{$1$}};
		\draw[black, thick] (-3,3) -- (-1.5-3,0);
		\draw[black, thick] (3,3) -- (1,0);
		\draw[black, thick] (3,3) -- (1+1.5,0);
		\draw[black, thick] (3,3) -- (1+3,0);
		\draw[black, thick] (3,3) -- (1+4.5,0);
		\node at (-1.5-3-.2,.75) {\color{red}{$0$}};
		\node at (1-.1,.75) {\color{red}{$0$}};
		\node at (1+1.5-.3,.75) {\color{red}{$1$}};
		\node at (1+3+.3,.75) {\color{red}{$2$}};
		\node at (1+4.5+.2,.75) {\color{red}{$3$}};
		\fill[blue!50!black] (0,5) circle (.1);
		\node at (0.3,5.3) {\color{blue}{$ag_1$}};
		\fill[red!50!black] (3,3) circle (.1);
		\node at (3+.2,3.4) {\color{red}{$ag_2$}};
		\fill[red!50!black] (-3,3) circle (.1);
		\node at (-3-.3,3.4) {\color{red}{$ag_2$}};
		\node at (-1.5-3,0-1.3) {\color{red}{$0$}};
		\node at (-1.5-3,0-.5) {\color{blue}{$0$}};
		\node at (1-.2,0-.5) {\color{blue}{$-1$}};
		\node at (1,0-1.3) {\color{red}{$3$}};
		\node at (1+1.5,0-.5) {\color{blue}{$0$}};
		\node at (1+1.5,0-1.3) {\color{red}{$2$}};
		\node at (1+3,0-.5) {\color{blue}{$1$}};
		\node at (1+3,0-1.3) {\color{red}{$1$}};
		\node at (1+4.5,0-.5) {\color{blue}{$2$}};
		\node at (1+4.5,0-1.3) {\color{red}{$0$}};
		\fill[black!50!black] (-1.5-3,0) circle (.1);
		\fill[black!50!black] (1,0) circle (.1);
		\fill[black!50!black] (1+1.5,0) circle (.1);
		\fill[black!50!black] (1+3,0) circle (.1);
		\fill[black!50!black] (1+4.5,0) circle (.1);
		\node at (0.3,-2.5) {$G_1$};
			\end{tikzpicture}
		}
	\hspace{2cm}
	\centering
	\subfloat{	\begin{tikzpicture}[scale=.5]
		\draw[black, thick] (0,5) -- (3,3);
		\draw[black, thick] (0,5) -- (-3,3);
		\node at (-2,4.3) {\color{blue}{$0$}};
		\node at (2,4.3) {\color{blue}{$1$}};
		\draw[black, thick] (-3,3) -- (-1.5-3,0);
		\draw[black, thick] (3,3) -- (1,0);
		\draw[black, thick] (3,3) -- (1+1.5,0);
		\draw[black, thick] (3,3) -- (1+3,0);
		\draw[black, thick] (3,3) -- (1+4.5,0);
		\node at (-1.5-3-.2,.75) {\color{red}{$0$}};
		\node at (1-.1,.75) {\color{red}{$0$}};
		\node at (1+1.5-.3,.75) {\color{red}{$1$}};
		\node at (1+3+.3,.75) {\color{red}{$2$}};
		\node at (1+4.5+.2,.75) {\color{red}{$3$}};
		\fill[blue!50!black] (0,5) circle (.1);
		\node at (0.3,5.3) {\color{blue}{$ag_1$}};
		\fill[red!50!black] (3,3) circle (.1);
		\node at (3+.2,3.4) {\color{red}{$ag_2$}};
		\fill[red!50!black] (-3,3) circle (.1);
		\node at (-3-.3,3.4) {\color{red}{$ag_2$}};
		\node at (-1.5-3,0-1.3) {\color{red}{$0$}};
		\node at (-1.5-3,0-.5) {\color{blue}{$0$}};
		\node at (1,0-.5) {\color{blue}{$1$}};
		\node at (1-.2,0-1.3) {\color{red}{$-2$}};
		\node at (1+1.5,0-.5) {\color{blue}{$1$}};
		\node at (1+1.5,0-1.3) {\color{red}{$-1$}};
		\node at (1+3,0-.5) {\color{blue}{$1$}};
		\node at (1+3,0-1.3) {\color{red}{$0$}};
		\node at (1+4.5,0-.5) {\color{blue}{$1$}};
		\node at (1+4.5,0-1.3) {\color{red}{$1$}};
		\fill[black!50!black] (-1.5-3,0) circle (.1);
		\fill[black!50!black] (1,0) circle (.1);
		\fill[black!50!black] (1+1.5,0) circle (.1);
		\fill[black!50!black] (1+3,0) circle (.1);
		\fill[black!50!black] (1+4.5,0) circle (.1);
			\node at (0.4,-2.5) {$G_2$};
		\end{tikzpicture}
		}
	\caption{Trust Game $(G_1)$, Social Game $(G_2)$. }
	\label{Fig:DG-for-TrustGame}
\end{figure}
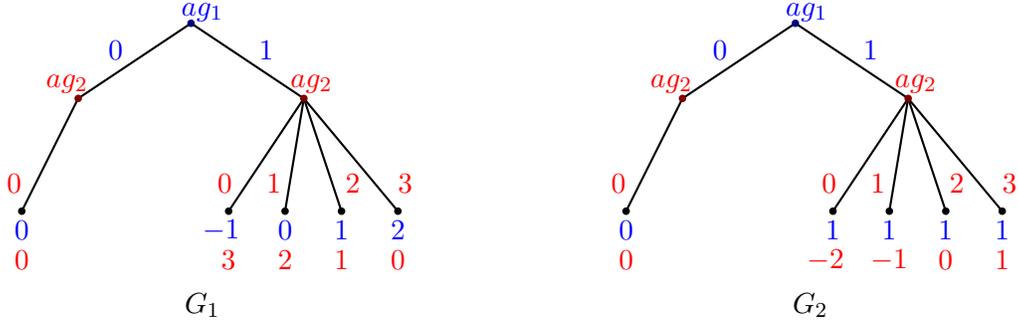

Consider a double game $G$ with basic games  $G_1$ and $G_2$ where
$A_1=[0,1]$, $A_2=\{x\geq 0|3y\geq x, y \in A_1\}$,  $\Theta_1=\{1/4\}$ and 
$\Theta_2=\{0, 2/3\}$. Therefore, in our model, the first agent is relatively selfish with prosocial coefficient or type $1/4$, whereas the second agent can either be completely selfish, with type $0$, or relatively prosocial with type $2/3$. We have 
$	u_1(y,x,1/4,2/3)=3x/4-y/2 $
and $ u_2(y,x,1/4,2/3)=x/3-y/3$. 
Thus, we obtain:
$$\arg\max_{x} u_2(y,x,1/4,2/3)=\{3y\}, \qquad \arg\max_{x} u_2(y,x,1/4,0)=\{0\}.$$
Hence, let
\begin{align*}
s_2(\theta_2)=\left\{
\begin{array}{ccc}
0& & \theta_2=0   \\ 
3y& & \theta_2=2/3
\end{array}
\right.
\end{align*}
We have $u_1(y,0,1/4,\theta_2)=-y/2$ and $u_1(y,3y,1/4,\theta_2)=7y/4. $
Let conditional probability distribution $p(2/3|1/4)=p_0$ and $p(0|1/4)=1-p_0$.
As a result, $$\sum_{\theta_2\in \Theta_2}p(\theta_2|1/4)u_1(y,s_2(\theta_2),1/4,\theta_2)=p_0(-9y/4)+7y/4.$$
Therefore
\begin{align*}
s_1(1/4)=\left\{
\begin{array}{ccc}
1& & p_0<7/9   \\ 
y& & p_0=7/9   \\
0& & p_0>7/9
\end{array}
\right.	
\end{align*}
Hence $(s_1(1/4), s_2(\theta_2))$ is a sub-game perfect equilibrium for the DG. We now see that, depending on its belief about agent 2, agent 1 can send any amount of money to agent 2 and agent 2 can return different amounts of money as an optimal solution for the Trust DG.

\section{Other proofs, results and examples}\label{appendix}
In this section, we first provide the proofs of the two lemmas in the paper and then indicate how multi-games can be employed to model production  in multi-markets.

\subsection{Proofs of the two lemmas} 
\textbf{Lemma\ref{Simlex-Make}}
	If an own-type-linear Bayesian game $G$ is  type-regular on $V \times \{\theta_{-i}\}$ for a given $i\in I$ and $\theta_{-i} \in \Theta_{-i}$, then it is type-regular on  $\Theta_i \times \{\theta_{-i}\}$.
\begin{proof}
	By Theorem~\ref{MG-equivalent}, $G$ is equivalent with a multi-game. Thus, we can assume $G$  is a type-regular multi-game on $V \times \{\theta_{-i}\}$ and therefore  there exists
	a map $\sigma^*_i(\cdot): V\to \Delta(A_i)$  and $\sigma_{-i}  \in \Delta(A_{-i})$ 
	such that for each $1\leq j \leq m$, the strategy profile
	$(\sigma^*_i(v_j), \sigma_{-i} )$ is a NE for the local game $G_{(v_j, \theta_{-i})}$.
	Thus, $\sigma^*_i(v_j)$ is a probability distribution on $A_i$ with $\sum_{a_i \in A_i} \sigma^*_i(v_j)(a_i)=1$.
	We extend the map $\sigma^*_i$ to a map $\overline{\sigma^*_i}(\cdot): \Theta_{i} \to \Delta(A_i)$ by 
	$\overline{\sigma^*_i}(\theta_{i})=	\sum_{j=1}^{m}\theta_{ij}\sigma^*_i(v_j)$ where 
	$\theta_i=(\theta_{i1}, \ldots, \theta_{im})^t$.
	Since
	\begin{align*}
	\sum_{a_i \in A_i}\Big(\sum_{j=1}^{m}\theta_{ij}\sigma^*_i(v_j)\Big)(a_i)& =
	\sum_{a_i \in A_i}\sum_{j=1}^{m}\theta_{ij}\sigma^*_i(v_j)(a_i)
	=\sum_{j=1}^{m}\theta_{ij}\sum_{a_i \in A_i} \sigma^*_i(v_j)(a_i)
	=\sum_{j=1}^{m}\theta_{ij}=1		
	\end{align*}
	the map $\overline{\sigma^*_i}(\cdot)$ is well-defined. We claim that the strategy profile 
	$(\overline{\sigma^*_1}(\theta_{1}), \ldots, \overline{\sigma^*_n}(\theta_{n}))$
	is a NE for
	the local game $G_{(\theta_i, \theta_{-i})}$ for $\theta_i\in \Theta_i$. 
	We have
	\begin{align*}
	u_i\Big(\overline{\sigma^*_i}(\theta_{i}), \sigma_{-i}, \theta_i, \theta_{-i}\Big)
	&=\sum_{j=1}^{m}\theta_{ij}	u_i\big(\sigma^*_i(v_j), \sigma_{-i}, \theta_i, \theta_{-i}\big)
	=\sum_{j=1}^{m}\theta_{ij}\sum_{j=1}^{m}\theta_{ij}
	u_{ij}\big(\sigma^*_i(v_j), \sigma_{-i}\big)\\&
	=\sum_{j=1}^{m}\theta_{ij}\sum_{j=1}^{m}\theta_{ij}	u_{i}\big(\sigma^*_i(v_j), \sigma_{-i},v_j ,\theta_{-i}\big)
	\end{align*}
	Since the strategy profile  $(\sigma^*_i(v_j), \sigma_{-i} )$ is a NE for the local game $G_{(v_j, \theta_{-i})}$,  it follows that
	$$u_{i}\big(\sigma^*_i(v_j), \sigma_{-i},v_j ,\theta_{-i}\big) \geq 
	u_{i}\big(\sigma_i, \sigma_{-i},v_j ,\theta_{-i}\big)
	$$ 	for any given $\sigma_i \in \Delta(A_i)$
	which yields:
	\begin{align*}
	u_i\Big(\overline{\sigma^*_i}(\theta_{i}), \sigma_{-i}, \theta_i, \theta_{-i}\Big)&
	\geq \sum_{j=1}^{m}\theta_{ij}\sum_{j=1}^{m}\theta_{ij}	u_{i}(\sigma_i, \sigma_{-i},v_j ,\theta_{-i})= \sum_{j=1}^{m} \theta_{ij} u_i(\sigma_i, \sigma_{-i},\theta_{i} ,\theta_{-i})\\&
	=u_i\Big(\sum_{j=1}^{m}\theta_{ij}\sigma_i, \sigma_{-i},\theta_{i} ,\theta_{-i}\Big)=u_i(\sigma_i, \sigma_{-i},\theta_{i} ,\theta_{-i})
	\end{align*}
	Hence  $\big(\overline{\sigma^*_i}(\theta_{i}), \sigma_{-i}\big)$ is a NE for 	$G_{(\theta_{i}, \theta_{-i})}$ for each $\theta_i\in \Theta_i$, i.e., $G$ is type-regular on $\Theta_i \times \{\theta_{-i}\}$.
\end{proof}
\noindent
\textbf{Lemma~\ref{Simlex-Make-Booundary}}
	If an own-type-linear Bayesian game  $G$ is type-regular on $V^n$, then $G$ is type-regular on the boundary set  $\bigcup_{i\in I} \Theta_i\times V^{n-1}$.
\begin{proof}
	Suppose $G$ is type-regular on $V^n$ with a map $\sigma^*_i(\cdot): V \to \Delta(A_i)$ for each $i \in I$ such that
	$(\sigma^*_1(v_{j_1}), \ldots, \sigma^*_n(v_{j_n}))$ is a NE for the local game $G_{(v_{j_1}, \ldots, v_{j_n})}$ where $1 \leq j_i \leq m $ and $i\in I$.
	By  Corollary~\ref{extention-regularity}, $\sigma^*_i(\cdot)$ can be extended to 
	$\overline{\sigma^*_i}(\cdot): \Theta_{i} \to \Delta(A_i)$
	such that 
	$$\Big(\sigma^*_1(v_{j_1}), \ldots, \sigma^*_{i-1}(v_{j_{i-1}}),\overline{\sigma^*_i}(\theta_{i}),\sigma^*_{i+1}
	(v_{j_{i+1}}),\ldots, \sigma^*_{n}(v_{j_{n}})\Big)$$
	is a NE for the local game 
	$G_{(v_{j_{1}},\ldots,v_{j_{i-1}}, \theta_{i},v_{j_{i+1}}, \ldots, \ldots,v_{j_{n}})},$ for all $\theta_{i} \in \Theta_{i},$
	$1 \leq j_t \leq m $ and $t \neq i$. 
	Since the map $\overline{\sigma^*_i}(\cdot)$ is independent  of  $\theta_{-i} \in \Theta_{-i}$, it follows that $G$ is type-regular on $\bigcup_{i\in I} \Theta_i\times V^{n-1}$.
\end{proof}
\subsection{Type-regularity in Double Games}
In the following, we discuss on 2-agent DG with symmetric games~\cite{cheng2004notes} to find conditions on type-regularity on extreme types with pure NEs. 
Assume that pure action profiles $(s_1,s_2)$ and $(s'_1,s'_2)$ are NEs for $G_1$ and $G_2$ respectively.
There are 16 possible cases which are can be summarized in four cases where 
$s_1=s'_1$ and $ s_2=s'_2 $, or $s_1=s'_1$ and $s_2 \neq s'_2$, or $s_1\neq s'_1$ and $s_2=s'_2$, or$s_1 \neq s'_1$ and $s_2 \neq s'_2$.
For each case, a few conditions are needed to ensure type-regularity on $\{0,1\}^2$.  The following result, which can be proved by a simple computation, provides the necessary and sufficient conditions for type-regularity on $\{0,1\}^2$. 
\begin{proposition}\label{regulirity-SDG}
	Let $G$  be a DG with symmetric games $G_1$ and $G_2$ and payoff matrices in Table~\ref{tab:adoption technology}. 
	Then $G$ is type-regular on  $\{0,1\}^2$ with witness of type-regularity based on NEs of $G_1$ and $G_2$ if and only if one of the conditions listed in the following table holds.
	\begin{table}[H]
		\centering
		\begin{tabular}{||c| c| c ||} 
			\hline
			NE for $G_1$ & 	NE for $G_2$ & Conditions \\ [0.5ex] 
			\hline\hline
			$(a_1,a_1)$ & $(a_1,a_1)$ & $a>c$ and  $e>g$  \\ 
			\hline
			$(a_1,a_1)$ & $(a_1,a_2)$ & $a>c$, $b>d$, $e=g$ and $f>h$  \\
			\hline
			$(a_1,a_1)$ & $(a_2,a_1)$& $a>c$, $b>d$, $e=g$ and $f>h$  \\
			\hline
			$(a_1,a_1)$ & $(a_2,a_2)$ & $a>c$, $b>d$, $g>e$ and $h>f$  \\ [1ex] 
			\hline
		\end{tabular}
	\end{table}
\end{proposition}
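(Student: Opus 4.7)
The plan is to reduce type-regularity on $\{0,1\}^2$ to the Nash property at four vertex type profiles, and then read off the inequalities on $a,b,\dots,h$ case by case. For a two-agent double game with own-type-linear utilities, a type profile $(\theta_1,\theta_2)\in\{0,1\}^2$ yields a local game in which agent $i$'s payoff is drawn from $G_1$ when $\theta_i=0$ and from $G_2$ when $\theta_i=1$, since the complementary weight is $0$. Thus the four local games are: $(0,0)$ is $G_1$; $(1,1)$ is $G_2$; and the two cross vertices $(0,1)$ and $(1,0)$ have one agent paid by $G_1$ and the other by $G_2$. The witness ``based on NEs of $G_1$ and $G_2$'' is $\sigma^*_i(0)=$ the $i$-th component of the chosen pure NE of $G_1$ and $\sigma^*_i(1)=$ the $i$-th component of the chosen pure NE of $G_2$. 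With this witness fixed, type-regularity on $\{0,1\}^2$ is equivalent to the assertion that each of the four resulting action profiles is a Nash equilibrium of its local game, and the proof reduces to checking these four conditions for each row of the table.

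Next, for each row I write out the four best-response inequalities explicitly using the symmetric payoff entries. The diagonal vertices $(0,0)$ and $(1,1)$ recover the single-game NE conditions of $G_1$ and $G_2$ respectively (for instance $a\geq c$ for $(a_1,a_1)$ in $G_1$, or $h\geq f$ for $(a_2,a_2)$ in $G_2$); the cross vertices contribute additional mixed conditions. In the case where $G_1$ has NE $(a_1,a_1)$ and $G_2$ has NE $(a_2,a_2)$, vertex $(0,1)$ forces agent 1 to prefer $a_1$ under $G_1$ against $a_2$ (yielding $b\geq d$) and agent 2 to prefer $a_2$ under $G_2$ against $a_1$ (yielding $g\geq e$); vertex $(1,0)$ produces the same two inequalities by symmetry, so the collected system is $a>c,\,b>d,\,g>e,\,h>f$, matching the table. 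For the two middle rows, where $G_2$ admits the asymmetric pure NE $(a_1,a_2)$ or $(a_2,a_1)$, the same procedure at the cross vertices generates best-response inequalities in both directions between $e$ and $g$, forcing the equality $e=g$. An $(a_1,a_2)$ NE in $G_2$ additionally demands $f\geq h$, and coupled with the $G_1$-side conditions $a\geq c$ and $b\geq d$ this gives exactly the listed requirements; the $(a_2,a_1)$ row is handled by a symmetric computation.

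The main delicate point is the bookkeeping of strict versus weak inequalities. A Nash equilibrium only requires weak best responses, but the table states strict inequalities (apart from the forced equality $e=g$). Strictness is what ensures that the prescribed witness is the unique best response at each vertex and rules out alternative pure responses that would otherwise coexist and spoil the ``based on NEs of $G_1$ and $G_2$'' characterisation. Once this is tracked through all four rows, the necessary-and-sufficient claim follows by direct inspection, confirming the proposition.
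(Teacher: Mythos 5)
Your proof is correct and is exactly the ``simple computation'' the paper alludes to but omits: identify the four vertex local games (agent $i$ paid by $G_1$ when $\theta_i=0$ and by $G_2$ when $\theta_i=1$), fix the witness from the chosen pure NEs, and collect the resulting best-response inequalities row by row. The only caveat is that the Nash condition yields weak inequalities, so the strict inequalities in the table are not genuinely necessary for the ``only if'' direction --- but that is an imprecision in the proposition's statement rather than in your argument, and you flag it appropriately.
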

\begin{table} 
	\centering
	\begin{tabular}{|c|c|c|}
		\cline{2-3} 
		\multicolumn{1}{c|}{} & $a_1$ & $a_2$  \\ \hline
		$a_1$ &\footnotesize$(a,a)$&\footnotesize$(b,c)$\\\hline
		$a_1$ &\footnotesize$(c,b)$&\footnotesize$(d,d)$\\\hline
	\end{tabular}\qquad
	\begin{tabular}{|c|c|c|}
		\cline{2-3} 
		\multicolumn{1}{c|}{} & $a_1$ & $a_2$  \\ \hline
		$a_1$& \footnotesize$(e,e)$&\footnotesize$(f, g)$\\\hline
		$a_2$&\footnotesize$(g, f)$&\footnotesize$(h,h)$\\\hline
	\end{tabular}
	\caption{Payoff matrices for  $G_1$ and $G_2$ in Proposition~\ref{regulirity-SDG}.}
	\label{tab:adoption technology}
\end{table}

A similar result can be deduced when $G_1$ and $G_2$ are not symmetric but there will be more conditions. For instance, we have the following.
\begin{proposition}\label{RDG}
	Suppose $G$ is a DG with basic games $G_1$ and $G_2$ whose payoff matrices are depicted in Table~\ref{tab:adoption technology}. The strategy profiles $(s,u)$ and $(t,v)$ are NEs for $G_1$ and $G_2$ which induce type-regularity on $\{0,1\}^2$ if and only if
	$a_1\geq c_1, h_1\geq f_1, b_1\geq d_1, g_1 \geq e_1,a_2\geq b_2, h_2\geq g_2, f_2\geq e_2$ and $ c_2 \geq d_2$ 
\end{proposition}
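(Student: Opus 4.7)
The plan is to enumerate the four local games on the boundary $\{0,1\}^2$ of the type space, define the candidate type-regularity witness from $(s,u)$ and $(t,v)$, and match the eight inequalities in the statement to the eight corresponding best-response checks. Since $\Theta_i=\{0,1\}$, agent~$i$'s payoff in $G_{(\theta_1,\theta_2)}$ equals $u_{i1}$ when $\theta_i=0$ and $u_{i2}$ when $\theta_i=1$; hence $G_{(0,0)}=G_1$, $G_{(1,1)}=G_2$, and in the two mixed local games the two agents consult different basic matrices. I read each cell of Table~\ref{tab:adoption technology} in the asymmetric sense, with the subscript on each letter indicating the agent.

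I would define the witness by $\sigma^*_1(0)=s$, $\sigma^*_1(1)=t$, $\sigma^*_2(0)=u$, $\sigma^*_2(1)=v$; type-regularity on $\{0,1\}^2$ is then equivalent to each of the four profiles $(\sigma^*_1(\theta_1),\sigma^*_2(\theta_2))$ being a Nash equilibrium of $G_{(\theta_1,\theta_2)}$. The dominance conditions $a_1\geq c_1,\,b_1\geq d_1$ and $a_2\geq b_2,\,c_2\geq d_2$ single out the first action as a weakly dominant strategy for both agents in $G_1$, forcing $(s,u)$ to be this profile, and by the same token $h_1\geq f_1,\,g_1\geq e_1$ and $h_2\geq g_2,\,f_2\geq e_2$ force $(t,v)$ to be the profile at which both agents play the second action in $G_2$. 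These same inequalities immediately supply the NE property of $(s,u)$ in $G_{(0,0)}$ and of $(t,v)$ in $G_{(1,1)}$.

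The remaining verification concerns the two mixed local games. In $G_{(0,1)}$ the witness prescribes the profile in which agent~$1$ plays $s$ and agent~$2$ plays $v$; the best-response check for agent~$1$ against $v$, evaluated in the $G_1$ payoff matrix, reads $b_1\geq d_1$, and the check for agent~$2$ against $s$, evaluated in the $G_2$ payoff matrix, reads $f_2\geq e_2$. Analogously, in $G_{(1,0)}$ the prescribed profile has agent~$1$ playing $t$ and agent~$2$ playing $u$, and the corresponding checks are $g_1\geq e_1$ (agent~$1$ in $G_2$) and $c_2\geq d_2$ (agent~$2$ in $G_1$). Conjoining these four inequalities with the four already identified for $G_{(0,0)}$ and $G_{(1,1)}$ reproduces exactly the list in the statement, and since each inequality corresponds uniquely to the defining best-response condition for one agent in one local game, the biconditional holds in both directions.

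The main obstacle is purely bookkeeping: one must carefully track which basic payoff matrix each agent uses in each of the four local games and then pair each of the eight inequalities with the correct best-response inequality in the correct cell, taking care not to conflate the subscripts that index the agent with the indices of the action labels. Once the labelling convention is pinned down, the proof reduces to four elementary $2\times 2$ Nash-equilibrium checks and the equivalence is transparent.
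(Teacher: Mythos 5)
Your proof is correct: the paper states Proposition~\ref{RDG} without proof (it is the ``simple computation'' analogue of Proposition~\ref{regulirity-SDG}), and your enumeration of the four local games $G_{(0,0)}=G_1$, $G_{(1,1)}=G_2$, $G_{(0,1)}$, $G_{(1,0)}$ with the witness $\sigma^*_1(0)=s,\ \sigma^*_1(1)=t,\ \sigma^*_2(0)=u,\ \sigma^*_2(1)=v$ yields exactly the eight best-response inequalities listed in the statement, with pure deviations sufficing by linearity over mixed strategies. The only cosmetic point is that the ``weak dominance'' framing in your second paragraph is a consequence of the full set of eight conditions rather than something that ``forces'' the profiles; the cell-by-cell verification in your third paragraph is the substance of the argument and is exactly right.
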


\begin{table} 
	\centering
	\begin{tabular}{|c|c|c|}
		\cline{2-3} 
		\multicolumn{1}{c|}{} & $u$ & $v$  \\ \hline
		$s$ &\footnotesize$(a_1,a_2)$&\footnotesize$(b_1,b_2)$\\\hline
		$t$ &\footnotesize$(c_1,c_2)$&\footnotesize$(d_1,d_2)$\\\hline
	\end{tabular}\qquad
	\begin{tabular}{|c|c|c|}
		\cline{2-3} 
		\multicolumn{1}{c|}{} & $u$ & $v$  \\ \hline
		$s$& \footnotesize$(e_1,e_2)$&\footnotesize$(f_1, f_2)$\\\hline
		$t$&\footnotesize$(g_1, g_2)$&\footnotesize$(h_1,h_2)$\\\hline
	\end{tabular}
	\caption{Payoff matrices for $G_1$ and $G_2$ in Proposition~\ref{RDG}.}
	\label{tab:adoption technology}
\end{table}
In the follwing example, basic games are considered parametrized coordination games in~\cite[p. 2]{cooper1999coordination}.
\begin{example}\label{CoordinationGame}
	Let $G$ be a DG whose basic games are coordination games with non-negative utilities as depicted in Table~\ref{tab:Ex:cordination}.
	The strategy  profile $(a_1,a_1)$ is a NE for $G_1$ and $G_2$. By Theorem~\ref{Bounadary}, $G$ is type-regular.
	In addition, by Corollary~\ref{extention-regularity}, the constant strategy map profile $(s_1(\cdot), s_2(\cdot))$ is a pure BNE for $G$ where $s_i(\cdot):[0,1]\to \{a_1,a_2\}$ for $i=1,2$ given by 
	$s_i(\theta_i)=a_1$.	
\end{example}
\begin{table}
	\centering
	\begin{tabular}{|c|c|c|}
		\cline{2-3} 
		\multicolumn{1}{c|}{} & $a_1$ & $a_2$  \\ \hline
		$a_1$ &\footnotesize$(x,x)$&\footnotesize$(x,0)$\\\hline
		$a_2$ &\footnotesize$(0,x)$&\footnotesize$(y,y)$\\\hline
	\end{tabular}\qquad
	\begin{tabular}{|c|c|c|}
		\cline{2-3} 
		\multicolumn{1}{c|}{} & $a_1$ & $a_2$  \\ \hline
		$a_1$& \footnotesize$(z,z)$&\footnotesize$(z,0)$\\\hline
		$a_2$&\footnotesize$(0,z)$&\footnotesize$(w,w)$\\\hline
	\end{tabular}
	\caption{Payoff matrices for $G_1$ and $G_2$ of Example~\ref{CoordinationGame}.}
	\label{tab:Ex:cordination}
\end{table}
\subsection{Production in multi-markets}
Consider $n$ multinational companies which compete in multi-markets consisting of, say, $m$ different markets each with its own rate of return. Assume that in each market $j \in J=\{1,\ldots, m\}$ a given product $s_j$ yields the greatest return but due to the design and manufacturing costs each company has to mroduce the same
product in all the $m$ markets, named $M_1, \ldots, M_m$. In this way, we have a multi-game with $A_i=\{s_j:j\in J\}$ for all $i\in I$ where $\theta_{ij}$ is the investment fraction of
company $i$ in market $j$. 
In addition, the total payoff is reduced to the convex combination of the individual payoff for each market weighted by
the rate of investment for that market.
\begin{table} 
	\centering
	\begin{tabular}{|c|c|c|c|}
		\cline{2-4} 
		\multicolumn{1}{c|}{} & $s_1$ & $s_2$ & $s_3$ \\ \hline
		$s_1$ & \footnotesize$(3,4)$&\footnotesize$(6,3)$   &\footnotesize$(7,1)           $          \\\hline
		$s_2$ & \footnotesize$(2,5)$&\footnotesize$(3,2)$ &\footnotesize$(5,3)$ \\\hline
		$s_3$ & \footnotesize$(1,3)$&\footnotesize$(0,2)$   &\footnotesize$(3,0) $ \\\hline
	\end{tabular}\qquad
	\begin{tabular}{|c|c|c|c|}
		\cline{2-4} 
		\multicolumn{1}{c|}{} & $s_1$ & $s_2$ & $s_3$  \\ \hline
		$s_1$ & \footnotesize$(0,4)$&\footnotesize$(0,8)$ &\footnotesize$(1,1)           $          \\\hline
		$s_2$ & \footnotesize$(6,1)$&\footnotesize$(4,5)$ &\footnotesize$(7,3)$ \\\hline
		$s_3$ & \footnotesize$(0,1)$&\footnotesize$(1,6)$ &\footnotesize$(1,3) $ \\\hline
	\end{tabular} 
	\qquad
	\begin{tabular}{|c|c|c|c|}
		\cline{2-4} 
		\multicolumn{1}{c|}{} & $s_1$ & $s_2$ & $s_3$  \\ \hline
		$s_1$ & \footnotesize$(1,0)$&\footnotesize$(1,2)$ &\footnotesize$(4,5)           $          \\\hline
		$s_2$ & \footnotesize$(0,1)$&\footnotesize$(3,2)$ &\footnotesize$(3,4)$ \\\hline
		$s_3$ & \footnotesize$(2,4)$&\footnotesize$(5,3)$ &\footnotesize$(6,7) $ \\\hline
	\end{tabular} 
	\caption{Payoff matrices for markets $M_1$, $M_2$ and  $M_3$. }
	\label{tab:Multinational Companies}
\end{table}
We give a numerical example. 
We model the competition of two firms in three markets with  multi-game  such that payoff matrices for agents in each market are shown in Table~\ref{tab:Multinational Companies} with $\Theta_i=\Sigma^2$ for each firm $i=1,2$.
It is easy to check that 
$\sigma^*_i(\cdot): V \to \Delta(A_i)$  given by $\sigma^*_1(v_j)=s_j$ for $j=1,2,3$
is a witness of type-regularity $G$  on $V$. By Corollary~\ref{suffucient and nassacery condition for multi-game}, $G$ is type-regular with type-regularity witness
$\overline{\sigma^*_i}(\cdot): \Theta_{i} \to \Delta(A_i)$ with
$\overline{\sigma^*_i}(\theta_{i})=	\sum_{j=1}^{m}\theta_{ij}\sigma^*_i(v_j).$  Theorem~\ref{motivation-Regularity} implies $(\overline{\sigma^*_1}(\cdot), \overline{\sigma^*_1}(\cdot))$ is a BNE for all prior.

\section{Conclusion}
We have developed the notion of type-regularity for Bayesian games which represents a necessary and sufficient condition for a Bayesian game to have a BNE independent of all priors. We have then shown that an $m$-dimensional $n$-agent Bayesian game in which each agent's utility is linearly dependent on the agent's type, is equivalent to a multi-game, i.e., a simultaneous competition by the agents in $m$ basic $n$-agent games. This result is also extended to generalised multi-games that are equivalent to Bayesian games, in which each agent's utility depends linearly on all agents' types. We have then proven that an own-type-linear Bayesian game is type-regular if and only if its equivalent multi-game is type-regular on the vertices of the $(m-1)$-dimensional simplex, a result which is used in different contexts to construct prior independent BNE.

We can employ multi-games to model decision making by agents investing with their individual weights in multiple environments or markets that are considered as basic games. Multi-games are also proposed as a way to model human rational-social decision making. In particular, we have  constructed a type-regular DG for the PD and a prior dependent perfect subgame NE for a DG based on the Trust game to account for the prosocial component in human decision making.

Here are a number of challenges and questions for further work: (i) type-regularity for type-linear Bayesian games, (ii) type-regularity for Bayesian games with piece-wise linear utilities, (iii) multi-games based on basic games with incomplete information, (iv) Bayesian network games, (v) existence and construction of regular pure BNE, (vi) further applications of linear-type Bayesian games.



\end{document}